\newcommand*{\algotitle}[2]{%
  \stepcounter{algocf}%
  \hypertarget{algocf.title.\theHalgocf}{}%
  \NR@gettitle{#1}%
  \label{#2}%
  \addtocounter{algocf}{-1}%
}
\newcommand{\set}[1]{\left\{ #1\right \}}
\newcommand{\oracle}{\mathcal{O}}
\newcommand{\dataDomain}{\mathcal{D}}
\newcommand{\D}{\mathcal{D}}
\newcommand{\relationSet}{\mathcal{R}}
\newcommand{\R}{\mathcal{R}}
\newcommand{\nat}{\mathbb{N}}
\newcommand{\satisfies}{\vDash}
\newcommand{\tree}{\mathcal{T}}
\newcommand{\lang}{\mathcal{L}}
\newcommand{\tuple}[1]{\langle{} #1 \rangle}
\newcommand{\dbracket}[1]{\llbracket{} #1 \rrbracket}
\newcommand{\parens}[1]{\left(#1 \right)}
\newcommand{\iso}{\simeq}
\newcommand{\fin}{\hfill$\lrcorner$}
\newcommand{\M}{\mathscr{M}}
\newcommand{\X}{\mathcal{X}}
\newcommand{\Structure}{\mathcal{S}}
\newcommand{\V}{\mathcal{V}}
\newcommand{\constraints}{\mathit{constraints}}
\newcommand{\Acts}{\mathit{Acts}}
\newcommand{\Vals}{\mathit{Vals}}
\newcommand{\push}{{\sf Push}}
\newcommand{\pop}{{\sf Pop}}
\definecolor{ochre}{rgb}{0.8, 0.47, 0.13}
\lstdefinestyle{py} {
    basicstyle  = \ttfamily\scriptsize,
    breaklines  = true,
    frame       = single,
    numbers     = left,
    tabsize     = 4,
}
\newcommand{\inv}{^{\textnormal{-}1}}
\newcommand{\etal}{\xspace{et al.}}
\newif\iflong
\title{Grey-Box Learning of Register Automata}
\author{Bharat Garhewal\inst{1}\thanks{Supported by NWO TOP project 612.001.852
    ``Grey-box learning of Interfaces for Refactoring Legacy Software (GIRLS)''.}
  \and Frits Vaandrager\inst{1} \and Falk Howar\inst{2} \and \\ Timo Schrijvers\inst{1} \and Toon Lenaerts\inst{1} \and Rob Smits\inst{1}}
\institute{Radboud University, Nijmegen, The Netherlands\\
\email{\{bharat.garhewal, frits.vaandrager\}@ru.nl}
\and
Dortmund University of Technology}
\authorrunning{B. Garhewal et al.}
\begin{document}


%
\maketitle
\begin{abstract}
	\iflong
	Model learning (a.k.a.\ active automata learning) is a highly effective technique for obtaining black-box finite state  models of software components. Thus far, generalization to infinite state systems with inputs and outputs that carry data parameters has been challenging. Existing model learning tools for infinite state systems face scalability problems and can only be applied to restricted classes of systems (register automata with equality/inequality).
	In this article, we 
	\else
	We
	\fi
	show how one can boost the performance of model learning techniques by extracting the constraints on input and output parameters from a run, and making this grey-box information available to the learner.
	More specifically, we provide new implementations of the tree oracle and equivalence oracle from the RALib tool, which
	use the derived constraints. We extract the constraints from runs of Python programs using an existing tainting library for Python, and compare our grey-box version of RALib with the existing black-box version on several benchmarks, including some data structures from Python's standard library.
	Our proof-of-principle implementation results in almost two orders of
  magnitude improvement in terms of numbers of inputs sent to the software
  system. Our approach, which can be generalized to richer model classes, also enables RALib to learn models that are out of reach of black-box techniques, such as combination locks.
    \keywords{Model learning \and Active Automata Learning \and Register Automata \and RALib \and Grey-box \and Tainting}
\end{abstract}

%

\section{Introduction}\label{sec:introduction}

Model learning, also known as active automata learning, is a black-box
technique for constructing state machine models of software and hardware
components from information obtained through testing (i.e., 
providing inputs and observing the resulting outputs).  
Model learning has been successfully used in
numerous applications, for instance for generating conformance test
suites of software components \cite{HMNSBI2001}, finding mistakes in
implementations of security-critical protocols
\cite{FJV16,FiterauEtAl17,FH17}, learning interfaces of classes in
software libraries \cite{HowarISBJ12}, and checking that a legacy
component and a refactored implementation have the same behaviour
\cite{SHV16}. We refer to \cite{Vaa17,HowarS2018} for surveys and
further references.

%

In many applications it is crucial for models to 
describe \emph{control flow}, i.e., states of a component, 
\emph{data flow}, i.e., constraints on data parameters that
are passed when the component interacts with its environment, as well
as the mutual influence between control flow and data flow. 
Such
models often take the form of \emph{extended finite state machines}
(EFSMs). Recently, various techniques have been employed to extend
automata learning to a specific class of EFSMs called \emph{register
automata}, which combine control flow with guards and assignments to
data variables~\cite{Cassel2016,AJUV15,CEGAR12}. 

While these works demonstrate that it is theoretically 
possible to infer such richer models, the presented approaches
do not scale well and are not
yet satisfactorily developed for richer classes of models
(c.f. \cite{HowarJV19}):
Existing techniques
either rely on manually constructed mappers that abstract the data
aspects of input and output symbols into a finite alphabet, or
otherwise infer guards and assignments from black-box observations of
test outputs. The latter can be costly, especially for models where
control flow depends on test on data parameters in input: in this
case, learning an exact guard that separates two control flow branches
may require a large number of queries. 

One promising strategy for addressing the challenge of identifying 
data-flow constraints is to augment learning algorithms with 
white-box information extraction methods, which are able to 
obtain information about the System Under Test (SUT) at lower cost than black-box techniques.
Several researchers have explored this idea.
Giannakopoulou \etal\ \cite{acm2414956} develop an active learning algorithm 
that infers safe interfaces of software components with 
guarded actions. In their model, the teacher is implemented
using concolic execution for the identification of guards.
Cho \etal\  \cite{acm2028077} present MACE an approach for concolic exploration 
of protocol behaviour. The approach uses active automata learning
for discovering so-called deep states in the protocol behaviour.
From these states, concolic execution is employed in order to
discover vulnerabilities. 
Similarly, \citet{BotincanB13} present a learning algorithm for 
inferring models of stream transducers that integrates active 
automata learning with symbolic execution and counterexample-guided 
abstraction refinement. They show how the models 
can be used to verify properties of input sanitizers in 
Web applications. 
Finally,~\citet{acm2483783} extend the work of~\cite{acm2414956} 
and integrate knowledge obtained through static code analysis 
about the potential effects of component 
method invocations on a component's state
to improve the performance during symbolic
queries. 
So far, however, white-box techniques have never been integrated 
with learning algorithms for register automata.

In this article, we present the first active learning algorithm for a general class of register automata that uses white-box techniques.
More specifically, we show how dynamic taint analysis can be used to 
efficiently extract constraints on input and output parameters 
from a test, and how these constraints can be used to improve the performance of the $\mathit{SL}^{\ast}$ algorithm of~\citet{Cassel2016}.
The $\mathit{SL}^{\ast}$ algorithm generalizes the classical
\(\mathit{L}^*\) algorithm of \citet{DanaAngluinBasePaper} and has
been used successfully to learn register automaton models, 
for instance of Linux and Windows implementations of TCP~\cite{FH17}.
We have implemented the presented method on top of 
RALib~\cite{raLibFirstPaper}, a library that provides an 
implementation of the $\mathit{SL}^{\ast}$ algorithm.

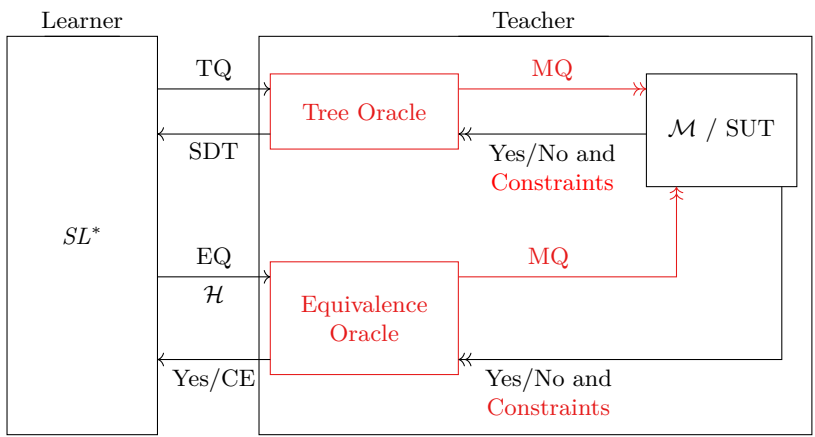
\begin{figure}[t]
  \centering
  \begin{tikzpicture}


\draw (1,-0.3) rectangle node[align=center] {$\mathit{SL}^{\ast}$} (3,5);
\draw[-] (1.5,5) -- node[above] {Learner} (2.5, 5);


\draw (4.35,-0.3) rectangle  (11.7,5);
\draw[-] (7,5) -- node[above] {Teacher} (9, 5);
\draw (9.5, 3) rectangle node[] {$\mathcal{M}$ / SUT} (11.5,4.5);


\draw[color=red!80!gray] (4.5,0.5) rectangle node[align=center] {Equivalence\\Oracle} (7,2);
\draw[color=red!80!gray] (4.5,3.5) rectangle node[] {Tree Oracle} (7,4.5);


\draw[->] (3,4.3) -- node[above] {TQ} (4.5, 4.3);
\draw[<-] (3,3.7) -- node[below] {SDT} (4.5, 3.7);

\draw[->] (3,1.8) -- node[above] {EQ} node[below] {$\mathcal{H}$} (4.5, 1.8);
\draw[<-] (3,0.7) -- node[below, align=center] {Yes/CE} (4.5, 0.7);


\draw[->>, color=red!80!gray] (7,4.3) -- node[above, color=red!80!gray] {MQ} (9.5, 4.3);
\draw[<<-, color=black] (7, 3.7) -- node[below, align=center] {Yes/No and\\\textcolor{red}{Constraints}} (9.5, 3.7);

\draw[->>, color=red!80!gray] (7, 1.8) node[xshift=1.2cm, above] {MQ} -| (9.9, 3);
\draw[<<-, color=black] (7, 0.7) node[xshift=1.2cm, below, align=center] {Yes/No and\\\textcolor{red!80!gray}{Constraints}} -| (11.3, 3);

\end{tikzpicture}

  \caption{MAT Framework (Our addition --- tainting --- in red): 
    Double arrows indicate possible multiple instances of a query made by an
    oracle for a single query by the learner. 
   \label{fig:matFramework}}
\end{figure}

The integration of the two techniques (dynamic taint analysis
and learning of register automata models) can be explained 
most easily with reference to the architecture of RALib, 
shown in \Cref{fig:matFramework}, which is a variation of the
\emph{Minimally Adequate Teacher} (MAT) framework of~\cite{DanaAngluinBasePaper}:
In the MAT framework, learning is viewed as a game in which a \emph{learner} has to infer the behaviour of an unknown register automaton $\mathcal{M}$ by asking queries to a \emph{teacher}.
We postulate $\mathcal{M}$ models the behaviour of a \emph{System Under Test (SUT)}.
In the learning phase, the learner (that is, $\mathit{SL}^{\ast}$) is allowed to ask questions to the
teacher in the form of \emph{tree queries} (TQs) and the teacher responds with
\emph{symbolic decision trees} (SDTs).
In order to construct these SDTs, the teacher uses a \emph{tree oracle}, which queries
the SUT with \emph{membership queries} (MQs) and receives a yes/no reply to each.
Typically, the tree oracle asks multiple MQs to answer a single tree query
in order to infer causal impact and flow of data values. 
Based on the answers on a number of tree queries, the learner constructs a \emph{hypothesis} in the form of a register
automaton \(\mathcal{H}\). The learner submits
\(\mathcal{H}\) as an \emph{equivalence query (EQ)} to the teacher, asking whether \(\mathcal{H}\) is equivalent to the SUT model $\mathcal{M}$.
The teacher uses an \emph{equivalence oracle} to answer equivalence queries.
Typically, the equivalence oracle asks multiple MQs to answer a single equivalence query.
If, for all membership queries, the output produced by the SUT is consistent with hypothesis $\mathcal{H}$,
the answer to the equivalence query is `Yes' (indicating learning is complete).
Otherwise, the answer `No' is provided, together with a \emph{counterexample} (CE) that indicates
a difference between  \(\mathcal{H}\) and $\mathcal{M}$. Based on this CE, learning continues.
In this extended MAT framework, we have constructed new implementations of 
the tree oracle and equivalence oracle that leverage the constraints on 
input and output parameters that are imposed by a program run:
dynamic tainting is used to extract the constraints on parameters 
that are encountered during a run of a program.
Our implementation learns models of Python programs,
using an existing tainting library for Python~\cite{abs-1810-08289}.
Effectively, the combination of the $\mathit{SL}^{\ast}$ with our new tree and equivalence oracles constitutes a \emph{grey-box} learning algorithm, since we only give the learner partial information about the internal structure of the SUT\@.

We compare our grey-box tree and equivalence oracles with the existing
black-box versions of these oracles on several benchmarks, including Python's
{\tt queue} and {\tt set} modules.
Our proof-of-concept implementation\footnote{Available at \url{https://bitbucket.org/toonlenaerts/taintralib/src/basic}.} results in almost two orders of magnitude
improvement in terms of numbers of inputs sent to the software system.
Our approach, which generalises to richer model classes,
also enables RALib to learn models that are completely out of reach 
for black-box techniques, such as combination locks.

\medskip
\noindent
\textbf{Outline:}~\Cref{sec:preliminaries} contains preliminaries;~\Cref{sec:tainting}
discusses tainting in our Python SUTs;~\Cref{sec:learningWithTainting} contains
the algorithms we use to answer TQs using tainting and the definition for the
tainted equivalence oracle needed to learn combination lock
automata;~\Cref{sec:experimentalEval} contains the experimental evaluation of
our technique; and~\Cref{sec:conclusion} concludes.

\iflong

\else

\fi


%
%
\section{Preliminary definitions and constructions}\label{sec:preliminaries}

This section contains the definitions and constructions necessary to understand active automata learning for models with dataflow.
We first define the concept of a \emph{structure}, followed by \emph{guards}, \emph{data languages}, \emph{register automata}, and finally \emph{symbolic decision trees}.

\begin{definition}[Structure]\label{def:theory}
    A structure \( \Structure = \tuple{R, \dataDomain,\relationSet}\) is a triple where $R$ is a set of relation symbols, each equipped with an arity,  \(\dataDomain \) is an infinite domain of data values, and \(\relationSet \) contains a distinguished $n$-ary relation $r^{\relationSet} \subseteq \dataDomain^n$ for each $n$-ary relation symbol $r \in R$.
\end{definition}

%
%

In the remainder of this article, we fix a structure \( \Structure = \tuple{R, \dataDomain,\relationSet}\), where $R$ contains a binary relation symbol $=$ and unary relation symbols $=c$, for each $c$ contained in a finite set $C$ of constant symbols, $\dataDomain$ equals the set $\nat$ of natural numbers, $=^{\relationSet}$ is interpreted as the equality predicate on $\nat$, and to each symbol $c \in C$ a natural number $n_c$ is associated such that  $(=c)^{\relationSet} = \{ n_c \}$.

Guards are a restricted type of Boolean formulas that may contain relation symbols from $R$.

\begin{definition}[Guards]
We postulate a countably infinite set $\V = \{ v_1, v_2,\ldots \}$ of \emph{variables}.
In addition, there is a variable $p \not\in\V$ that will play a special role as formal parameter of input symbols; we write $\V^+ = \V \cup \{ p \}$.
A {\em guard} is a conjunction of relation symbols and negated relation symbols over variables.
Formally, the set of \emph{guards} is inductively defined as follows:
\begin{itemize}
	\item 
	If $r\in R$ is an $n$-ary relation symbol and $x_1 ,\ldots, x_n$ are variables from $\V^+$, then $r(x_1,\ldots,x_n)$ and $\neg r(x_1,\ldots,x_n)$ are guards.
	\item
	If $g_1$ and $g_2$ are guards then $g_1 \wedge g_2$ is a guard.
\end{itemize}
Let $X \subset \V^+$.
We say that $g$ is a guard \emph{over} $X$ if all variables that occur in $g$ are contained in $X$.
A \emph{variable renaming} is a function $\sigma:X \rightarrow \V^+$.
If $g$ is a guard over $X$ then $g[\sigma]$ is the guard obtained by replacing each variable $x$ in $g$ by $\sigma(x)$.
\end{definition}

Next, we define the notion of a \emph{data language}. For this, we fix a finite set of \emph{actions} \(\Sigma \).
%
A \emph{data symbol} \(\alpha(d)\) is a pair consisting of an action \(\alpha \in \Sigma \) and a data value \(d \in \D \).
While relations may have arbitrary arity, we will assume that all actions have an arity of one to ease notation and simplify the text.
A \emph{data word} is a finite sequence of data symbols, and a \emph{data language} is a set of data words.
We denote concatenation of data words $w$ and $w'$ by \(w \cdot w'\), where \(w\) is the \emph{prefix} and \(w'\) is the \emph{suffix}.
\(\Acts(w)\) denotes the sequence of actions \(\alpha_1 \alpha_2 \ldots \alpha_n\) in \(w\), and \(\Vals(w)\) denotes the sequence of data values \(d_1 d_2 \ldots d_n\) in \(w\). 
We refer to a sequence of actions in $\Sigma^{\ast}$ as a \emph{symbolic suffix}. If $w$ is a symbolic suffix then we write $\dbracket{w}$ for the set of data words $u$ with $\Acts(u) = w$.

Data languages may be represented by \emph{register automaton}, defined below.

\begin{definition}[Register Automaton]\label{def:registerAutomaton}
    A Register Automaton (RA) is a tuple \(\M = (L, l_0, \X, \Gamma, \lambda)\) where
    \begin{itemize}
        \item \(L\) is a finite set of locations, with \(l_0\) as the initial location;

        \item \(\X\) maps each location \(l \in L\) to a finite set of registers \(\X(l)\);

        \item \(\Gamma \) is a finite set of transitions, each of the form \(\tuple{l, \alpha(p), g, \pi, l'}\), where
              \begin{itemize}
                  \item \(l, l'\) are source and target locations respectively,
                  \item \(\alpha(p)\) is a parametrised action,
                  \item \(g\) is a \emph{guard} over \(\X(l) \cup \set{p}  \), and
                  \item \(\pi \) is an assignment mapping from \(\X(l')\) to \(\X(l) \cup \set{p}\); and
              \end{itemize}

        \item \(\lambda \) maps each location in \(L\) to either accepting $(+)$ or rejecting $(-)$.
    \end{itemize}
We require that $\M$ is {\em deterministic} in the sense
that for each  location $l \in L$ and input symbol $\alpha \in \Sigma$,
the conjunction of the guards of any pair of distinct $\alpha$-transitions with source $l$ is not satisfiable.
$\M$ is \emph{completely specified} if for all \(\alpha \)-transitions out of a location, the disjunction of the guards of the \(\alpha \)-transitions is a tautology.
$\M$ is said to be \emph{simple} if there are no registers in the initial location, i.e., \(\X(l_{0}) = \varnothing \).
In this text, all RAs are assumed to be completely specified and simple, unless explicitly stated otherwise.
Locations $l \in L$ with $\lambda(l) = +$ are called \emph{accepting}, and locations with $\lambda(l) = -$ \emph{rejecting}.
\end{definition}

\begin{example}[FIFO-buffer]
	The register automaton displayed in~\Cref{fig:fifo} models a FIFO-buffer with capacity 2. It has three accepting locations $l_0$, $l_1$ and $l_2$ (denoted by a double circle), and one rejecting ``sink'' location $l_3$ (denoted by a single circle). Function $\mathcal{X}$ assigns the empty set of registers to locations $l_0$ and $l_3$, singleton set $\set{x}$ to location $l_1$, and set $\set{x, y}$ to $l_2$.
	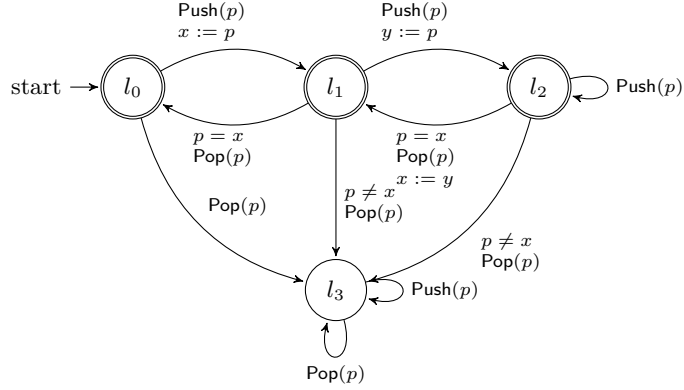
\begin{figure}[htb]
		\centering
		\begin{tikzpicture}[->,>=stealth',shorten >=1pt,auto,node distance=2.7cm,main node/.style={circle,draw,font=\sffamily\large\bfseries}]
		\node[initial, state,accepting] (1) {$l_0$};
		\node[state,accepting] (2) [right of=1] {$l_1$};
		\node[state,accepting] (3) [right of=2] {$l_2$};
		\node[state] (4) [below of=2] {$l_3$};
		
		\path[every node/.style={font=\sffamily\scriptsize}]
		(1) edge [bend left, text width=1.5cm] node {$\push(p)$ \\ $x:=p$} (2)
		    edge [bend right] node {$\pop(p)$} (4)
		(2) edge [bend left, text width=1.5cm] node {$\push(p)$ \\ $y:=p$} (3)
		    edge [bend left, text width=1.1cm] node {$p = x$ \\ $\pop(p)$} (1)
		    edge [text width=2cm] node[pos=0.6] {$p \neq x$ \\ $\pop(p)$} (4)
		(3) edge [bend left, text width=1.1cm] node {$p = x$ \\ $\pop(p)$ \\ $x:=y$} (2)
		    edge [bend left, text width=2cm] node {$p \neq x$ \\ $\pop(p)$} (4)
		    edge [loop right] node {$\push(p)$} (3)
		(4) edge [loop right] node {$\push(p)$} (4)
		    edge [loop below] node {$\pop(p)$} (4);
		\end{tikzpicture}	
		\caption{FIFO-buffer with a capacity of 2 modeled as a register automaton.}
		\label{fig:fifo}
	\end{figure}
\end{example}

\subsection{Semantics of a RA}\label{sec:semanticsRA}
We now formalise the semantics of an RA\@.
A \emph{valuation} of a set of variables $X$ is a function $\nu : X \rightarrow \D$ that assigns data values to variables in $X$.
If $\nu$ is a valuation of $X$ and $g$ is a guard over $X$ then $\nu \models g$ is defined inductively by:
\begin{itemize}
	\item 
	$\nu \models r(x_1,\ldots,x_n)$  iff $(\nu(x_1),\ldots,\nu(x_n)) \in r^{\R}$
	\item 
	$\nu \models \neg r(x_1,\ldots,x_n)$  iff $(\nu(x_1),\ldots,\nu(x_n)) \not\in r^{\R}$
	\item 
	$\nu \models g_1 \wedge g_2$ iff $\nu  \models g_1$ and $\nu \models g_2$
\end{itemize}
A \emph{state} of a RA \( \M = (L, l_0, \mathcal{X}, \Gamma,
\lambda)\) is a pair \(\tuple{l,\nu}\), where \(l \in L\) is a location and \(\nu : \X(l) \xrightarrow[]{} \D \) is a valuation of the set of registers at location \(l\). 
A \emph{run} of \(\M\) over data word \(w = \alpha_1(d_1) \ldots \alpha_n(d_n)\) is a sequence 
\[
  \tuple{l_0, \nu_0} \xrightarrow{\alpha_1(d_1), g_1, \pi_1} \tuple{l_1, \nu_1} \ldots
  \tuple{l_{n-1}, \nu_{n-1}} \xrightarrow{\alpha_n(d_n), g_n, \pi_n} \tuple{l_n, \nu_n},
\]
where 
\begin{itemize}
	\item 
	for each $0 \leq i \leq n$, $\tuple{l_i, \nu_i}$ is a state (with $l_0$ the initial location),
	\item 
	for each $0 < i \leq n$, \(\tuple{l_{i-1}, \alpha_i(p), g_i, \pi_i, l_i} \in \Gamma \) such that
	 \( \iota_i \satisfies g_i \) and  \(\nu_i = \iota_i \circ \pi_i \), where $\iota_i = \nu_{i-1} \cup \{ [p \mapsto d_{i}] \}$ extends \(\nu_{i-1}\) by mapping $p$ to $d_i$.
\end{itemize}
A run is \emph{accepting} if \(\lambda(l_n) = + \), else \emph{rejecting}.
The language of $\M$, notation $L(\M)$, is the set of words $w$ such that $\M$ has an accepting run over $w$.
Word $w$ is \emph{accepted (rejected) under} valuation $\nu_0$ if $\M$ has an accepting (rejecting) run that starts in state $\langle l_0, \nu_o \rangle$.

\begin{example}
	Consider the FIFO-buffer example from~\Cref{fig:fifo}.  This RA has a run
	\begin{eqnarray*}
		\tuple{l_0, \nu_0 = []} & \xrightarrow{\push(7), g_1 \equiv \top, \pi_1 = [x \mapsto p]} & \tuple{l_1, \nu_1 = [x \mapsto 7]}  \\
		& \xrightarrow{\push(7), g_2 \equiv \top, \pi_2 = [x \mapsto x, y \mapsto p]} & \tuple{l_2, \nu_2 = [x\mapsto 7, y \mapsto 7]} \\
		& \xrightarrow{\pop(7), g_3 \equiv p=x, \pi_3 = [x \mapsto y]}& \tuple{l_1, \nu_3 = [x \mapsto 7]}\\
		& \xrightarrow{\push(5), g_4 \equiv \top, \pi_4 = [x \mapsto x, y \mapsto p]} & \tuple{l_2, \nu_4 = [x \mapsto 7, y\mapsto 5]} \\
		& \xrightarrow{\pop(7), g_5 \equiv p=x, \pi_5 = [x \mapsto y]} & \tuple{l_1, \nu_5 = [x \mapsto 5]}\\
		& \xrightarrow{\pop(5), g_6 \equiv p=x, \pi_6 = []} & \tuple{l_0, \nu_6 = []}
	\end{eqnarray*}
	and thus the trace is $\push(7) ~  \push(7)~ \pop(7)  ~ \push(5) ~  \pop(7)  ~ \pop(5)$.
  \fin{}
\end{example}

\subsection{Symbolic Decision Tree\label{sec:SymbolicDecisionTree}}

The $\mathit{SL}^*$ algorithm uses  \emph{tree queries} in place of membership queries.
The arguments of a tree query are a prefix data word $u$ and a symbolic suffix $w$, i.e., a data word with uninstantiated data parameters.
The response to a tree query is a so called \emph{symbolic decision tree} (SDT), which
has the form of tree-shaped register automaton that accepts/rejects suffixes 
obtained by instantiating data parameters in one of the symbolic suffixes.
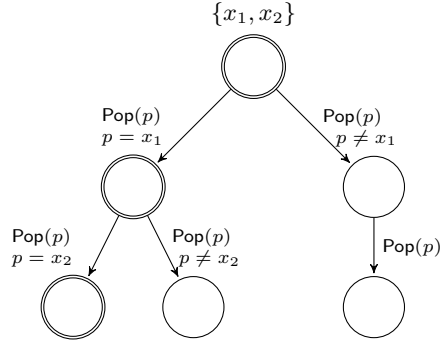
\begin{figure}[t!]
\begin{center}
\begin{tikzpicture}[->,>=stealth',shorten >=1pt,auto,scale=0.8]
\node[state, accepting] (N4) at (0,0) {};
\node[state] (N6) at (5,0){};
\node[state] (N5) at (2,0){};

\node[state, accepting] (N2) at (1,2){};
\node[state] (N3) at (5,2){};

\node[state, accepting, label={$\set{x_1, x_2}$}] (N1) at (3,4){};

\path[every node/.style={font=\sffamily\scriptsize}]
	(N1) edge [text width=1.1cm] node[left] {$\pop(p)$\\$p = x_1$} (N2)
 	     edge [text width=2cm] node[right] {$\pop(p)$\\$~~p \neq x_1$} (N3)
 	(N2) edge [text width=1.1cm] node[left] {$\pop(p)$\\$p = x_2$} (N4)
         edge [text width=1.1cm] node[right] {$\pop(p)$\\$~p \neq x_2$} (N5)
    (N3) edge [text width=1.1cm] node[right] {$\pop(p)$} (N6);
\end{tikzpicture}
\caption{SDT for prefix $\push(5)~ \push(7)$ and (symbolic) suffix $\pop ~ \pop$.\label{fig:SDT}}
\end{center}
\end{figure}
Let us illustrate this on the FIFO-buffer example from~\Cref{fig:fifo}
for the prefix $\push(5)~ \push(7)$ and the symbolic suffix $\pop ~ \pop$.
The acceptance/rejection of suffixes obtained by instantiating data parameters after $\push(5) ~ \push(7)$ can be represented by the SDT in~\Cref{fig:SDT}.
In the initial location, values $5$ and $7$ from the prefix are stored in registers $x_1$ and $x_2$, respectively.
Thus, SDTs will generally not be simple RAs.
Moreover, since the leaves of an SDT have no outgoing transitions, they are also not completely specified.
We use the convention that register $x_i$ stores the $i^{\mathit{th}}$ data value.
Thus, initially, register $x_1$ contains value $5$ and register $x_2$ contains value $7$.
The initial transitions in the SDT contain an update $x_3 := p$, and the final transitions an update $x_4 :=p$. 
For readability, these updates are not displayed in the diagram.
The SDT accepts suffixes of form $\pop(d_1) ~ \pop(d_2)$ iff $d_1$ equals the value stored in register $x_1$, and $d_2$ equals the data value stored in register $x_2$.
\iflong

The formal definitions of an SDT and the notion of a tree oracle are presented in Appendix~\ref{appendix:treeOracleEquality}.
\fi
For a more detailed discussion of SDTs we refer to~\cite{Cassel2016}.



%
\section{Tainting}\label{sec:tainting}
We postulate that the behaviour of the SUT (in our case: a Python program) can be modeled by a register automaton $\mathscr{M}$.
In a black-box setting, observations on the SUT will then correspond to words from the data language of $\mathscr{M}$.
In this section, we will describe the additional observations that a learner can make in a grey-box setting, where the constraints on the data parameters that are imposed within a run become visible.
In this setting, observations of the learner will correspond to what we call tainted words of $\mathscr{M}$.
Tainting semantics is an extension of the standard semantics in which each input value is ``tainted'' with a unique marker from \(\V \).
In a data word \(w = \alpha_1(d_1) \alpha_2(d_2) \ldots \alpha_n(d_n)\), the first data value $d_1$ is tainted with marker $v_1$, the second data value $d_2$ with $v_2$, etc.
While the same data value may occur repeatedly in a data word, all the markers are different.


\subsection{Semantics of Tainting}\label{sec:taintingSemantics}

A \emph{tainted state} of a RA \( \M = (L, l_0, \mathcal{X}, \Gamma,
\lambda)\) is a triple \(\tuple{l,\nu, \zeta}\), where \(l \in L\) is a location, \(\nu : \X(l) \rightarrow \D \) is a valuation, and $\zeta : \X(l) \rightarrow \V$ is a function that assigns a marker to each register of $l$. 
A \emph{tainted run} of \(\M\) over data word \(w = \alpha_1(d_1) \ldots \alpha_n(d_n)\) is a sequence 
\[
\tau = \tuple{l_0, \nu_0, \zeta_0} \xrightarrow{\alpha_1(d_1), g_1, \pi_1} \tuple{l_1, \nu_1, \zeta_1} \ldots
\tuple{l_{n-1}, \nu_{n-1}, \zeta_{n-1}} \xrightarrow{\alpha_n(d_n), g_n, \pi_n} \tuple{l_n, \nu_n, \zeta_n},
\]
where 
\begin{itemize}
	\item 
	$\tuple{l_0, \nu_0} \xrightarrow{\alpha_1(d_1), g_1, \pi_1} \!\tuple{l_1, \nu_1} \ldots
	\!\tuple{l_{n-1}, \nu_{n-1}} \xrightarrow{\alpha_n(d_n), g_n, \pi_n} \!\tuple{l_n, \nu_n}$ is a run of \(\mathscr{M}\),
	\item 
	for each $0 \leq i \leq n$, $\tuple{l_i, \nu_i, \zeta_i}$ is a tainted state,
	\item 
	for each $0 < i \leq n$, $\zeta_i = \kappa_i \circ \pi_i$, where $\kappa_i = \zeta_{i-1} \cup \{ (p,v_i) \}$.
\end{itemize}
The tainted word of $\tau$ is the sequence $w = \alpha_1 (d_1) G_1 \alpha_2 (d_2) G_2 \cdots \alpha_n (d_n) G_n$,
where $G_i = g_i[\kappa_i]$, for $0 < i \leq n$. 
We define $\constraints_{\M}(\tau) = [G_1,\ldots, G_n]$. 

Let $w = \alpha_1(d_1) \ldots \alpha_n(d_n)$ be a data word.  Since register automata are deterministic, there is a unique tainted run $\tau$ over $w$.  We define $\constraints_{\M}(w) = \constraints_{\M}(\tau)$, that is, the constraints associated to a data word are the constraints of the unique tainted run that corresponds to it.
In the untainted setting a membership query for data word $w$ leads to a response ``yes'' if $w \in L(\M)$, and a response ``no'' otherwise, but in a tainted setting the predicates $\constraints_{\M}(w)$ are also included in the response, and provide additional information that the learner may use.

\begin{example}
Consider the FIFO-buffer example from~\Cref{fig:fifo}.  This RA has a tainted run
\begin{alignat*}{2}
	 \tuple{l_0, [],[]} & \xrightarrow{\push(7)} && \tuple{l_1, [x \mapsto 7], [x\mapsto v_1]}
 \xrightarrow{\push(7)}  \tuple{l_2, [x\mapsto 7,y\mapsto 7], [x\mapsto v_1, y \mapsto v_2]} \\
& \xrightarrow{\pop(7)}&& \tuple{l_1, [x\mapsto 7], [x\mapsto v_2]}
 \xrightarrow{\push(5)}  \tuple{l_2, [x\mapsto 7,y\mapsto 5], [x\mapsto v_2, y \mapsto v_4]} \\
& \xrightarrow{\pop(7)} && \tuple{l_1, [x\mapsto 5], [y\mapsto v_4]}
 \xrightarrow{\pop(5)}  \tuple{l_0, [], []}
\end{alignat*}
(For readability, guards $g_i$ and assignments $\pi_i$ have been left out.)
The constraints in the corresponding tainted trace can be computed as follows:
\begin{align*}
&\kappa_1 = [p \mapsto v_1]  && G_1 \equiv \top[\kappa_1] \equiv \top\\
&\kappa_2 = [x \mapsto v_1, p \mapsto v_2]  && G_2 \equiv \top[\kappa_2] \equiv \top\\
&\kappa_3 = [x \mapsto v_1, y \mapsto v_2, p \mapsto v_3]  && G_3 \equiv (p=x)[\kappa_3] \equiv v_3 = v_1\\
&\kappa_4 = [x \mapsto v_2, p \mapsto v_4]  && G_4 \equiv \top[\kappa_4] \equiv \top\\
&\kappa_5 = [x \mapsto v_2, y \mapsto v_4, p \mapsto v_5]  && G_5 \equiv (p=x)[\kappa_5] \equiv v_5 = v_2\\
&\kappa_6 = [x \mapsto v_4, p \mapsto v_6]  && G_6 \equiv (p=x)[\kappa_6] \equiv v_6 = v_4
\end{align*}
and thus the tainted word is:
\[
\push(7) ~ \top~  \push(7) ~ \top ~ \pop(7) ~ v_3 = v_1 ~ \push(5) ~ \top~  \pop(7) ~ v_5=v_2 ~ \pop(5) ~ v_6 = v_4,
\]
and the corresponding list of constraints is $ [ \top, \top, v_3 = v_1, \top, v_5 = v_2, v_6 = v_4]$.
\fin{}
\end{example}

Various techniques can be used to observe tainted traces, for instance symbolic and concolic execution.
In this work, we have used a library called ``\texttt{taintedstr}'' to achieve tainting in Python and make tainted traces available to the learner.

\subsection{Tainting in Python}\label{sec:PythonTainting}

Tainting in Python is achieved by using a library called
``\texttt{taintedstr}''\footnote{See \cite{abs-1810-08289} and
  \url{https://github.com/vrthra/taintedstr}.}, which implements a
``\texttt{tstr}'' (\emph{tainted string}) class.
We do not discuss the entire implementation in detail, but only introduce the
portions relevant to our work.
The ``\texttt{tstr}'' class works by \emph{operator overloading}:
each operator is overloaded to record its own invocation.
The \texttt{tstr} class overloads the implementation of the ``\texttt{\_\_eq\_\_}''
(equality) method in Python's \texttt{str} class, amongst others.
In this text, we only consider the equality method. 
A \texttt{tstr} object \(x\) can be considered as a triple \(\tuple{o, t, \mathit{cs}}\),
where \(o\) is the (base) string object, \(t\) is the taint value associated
with string \(o\), and \(\mathit{cs}\) is a set of comparisons made by \(x\) with other
objects, where each comparison \(c \in \mathit{cs}\) is a triple \(\tuple{f, a, b}\) with
\(f\) the name of the binary method invoked on \(x\), \(a\) a copy of \(x\), and
\(b\) the argument supplied to \(f\).

Each a method \(f\) in the \texttt{tstr} class is an overloaded implementation
of the relevant (base) method \(f\) as follows:
\begin{lstlisting}[style=py, language=Python]
def f(self, other):
    self.cs.add((m._name_, self, other))
    return self.o.f(other) # `o' is the base string
\end{lstlisting}
We present a short example of how such an overloaded method would work below:
\begin{example}[\texttt{tstr} tainting]\label{example:tstrTainting}
  Consider two \texttt{tstr} objects: \(x_1 = \tuple{ \textnormal{``1''} , 1, \emptyset}\)
  and \(x_2 = \tuple{ \textnormal{``1''}, 2, \emptyset}\). 
  Calling \(x_1 == x_2 \) returns \textbf{True} as $x_{1}.o = x_{2}.o$.
  As a side-effect of $f$, the set of comparisons $x_{1}.cs$ is updated with the
  triple \(c = \tuple{ \textnormal{``\_\_eq\_\_''}, x_1, x_2}\).
  We may then confirm that \(x_1\) is compared to \(x_2\) by checking the taint
  values of the variables in comparison \(c\): \(x_1.t = 1\) and \(x_2.t
  = 2\).

 Note, our approach to tainting limits the recorded information to operations performed
on a \texttt{tstr} object.
\iflong
\fin{}
\end{example}
\begin{example}[Complicated Comparison]
\fi
  Consider the following snippet, where \(x_1, x_2, x_3\) are \texttt{tstr} objects
with \(1,2,3\) as taint values respectively:
\begin{lstlisting}[style=py, language=Python, gobble=2]
  if not (x_1 == x_2 or (x_2 != x_3)):
      # do something
\end{lstlisting}
  If the base values of \(x_1\) and \(x_2\) are equal, the
  Python interpreter will ``short-circuit'' the if-statement and the second
  condition, \(x_2 \neq x_3\), will not be evaluated.
  Thus, we only obtain one comparison: $x_{1} = x_{2}$.
  On the other hand, if the base values of \(x_1\) and \(x_2\) are not equal,
  the interpreter will not short-circuit, and both comparisons will be recorded
  as \( \set{x_2 = x_3, x_1 \neq x_2}\).
  \iflong
  While the comparisons are stored as a set, from the perspective of the tainted
  trace, the guard(s) is a single conjunction: \( x_2 = x_3 \wedge x_1 \neq x_2
  \).
  \fi
  However, the external negation  operation
  will not be recorded by any of the \texttt{tstr} objects: the
  negation was not performed on the \texttt{tstr} objects.
  \fin{}
\end{example}

%
%
%


%
\section{Learning Register Automata using Tainting}\label{sec:learningWithTainting}

Given an SUT and a tree query, we generate an SDT in the following steps:
\emph{(i)} construct a \emph{characteristic predicate} of the tree query (\Cref{alg:decisionQuery}) using membership and guard queries,
\emph{(ii)} transform the characteristic predicate into an SDT (\Cref{alg:fullSDT}), and
\emph{(iii)} minimise the obtained SDT (\Cref{alg:minimalSDT}).

\subsection{Tainted Tree Oracle \label{sec:taintedTreeOracle}}

\subsubsection{Construction of Characteristic Predicate}\label{sec:decisionRelation}

For $u = \alpha(d_1) \cdots \alpha_k(d_k)$ a data word, $\nu_u$ denotes the valuation of $\set{x_1,\ldots,x_k}$ with $\nu_u(x_i) = d_i$, for $1 \leq i \leq k$.
Suppose $u$ is a prefix and $w= \alpha_{k+1} \cdots \alpha_{k+n}$ is a symbolic suffix. Then $H$ is a \emph{characteristic predicate} for $u$ and $w$ in $\M$ if, for each valuation $\nu$ of $\set{x_1,\ldots, x_{k+n}}$ that extends $\nu_u$,
\[
	\nu \models H \iff  \alpha_1(\nu(x_1)) \cdots \alpha_{k+n}(\nu(x_{k+n})) \in L(\M),
\]
that is, $H$ characterizes the data words $u'$ with $\Acts(u') = w$ such that $u \cdot u'$ is accepted by $\M$.
In the case of the FIFO-buffer example from~\Cref{fig:fifo}, a characteristic predicate
for prefix $\push(5)~ \push(7)$ and symbolic suffix $\pop ~ \pop$ is $x_3 = x_1 \wedge x_4 = x_2$.
A characteristic predicate for the empty prefix and symbolic suffix $\pop$ is $\perp$, since this trace will inevitably lead to the sink location $l_3$ and there are no accepting words.

\Cref{alg:decisionQuery} shows how a characteristic predicate may be computed by systematically exploring all the (finitely many) paths of $\M$ with prefix $u$ and suffix $w$ using tainted membership queries.
During the execution of~\Cref{alg:decisionQuery}, predicate $G$ describes the part of the parameter space that still needs to be explored, whereas $H$ is the characteristic predicate for the part of the parameter space that has been covered.
We use the notation $H \equiv T$ to indicate syntactic equivalence, and $H = T$ to indicate logical equivalence.
Note, if there exists no parameter space to be explored (i.e., $w$ is empty)
and $u \in L(\M)$, the algorithm returns $H \equiv \perp \vee \top$ (as the empty conjunction equals $\top$).
\begin{algorithm}[t]
    \KwData{A tree query consisting of prefix \(u = \alpha_1(d_1) \cdots \alpha_k(d_k)\) and symbolic suffix \(w = \alpha_{k+1} \cdots \alpha_{k+n} \)}
    \KwResult{A characteristic predicate for $u$ and $w$ in $\M$}
    \(G := \top \), \(H := \bot \),
    $V :=  \set{x_1,\ldots, x_{k+n}}$ \;
    \While{$\exists \mbox{ valuation } \nu \mbox{ for }  V \mbox{ that extends } \nu_u \mbox{ such that } \nu \models G$}{
        \(\nu := \) valuation for $V$ that extends \(\nu_u \) such that \(\nu \models G \)  \;
        \(z :=  \alpha_1(\nu(x_1)) \cdots \alpha_{k+n}(\nu(x_{k+n})) \) 
        \tcp*{Construct membership query}
        \(I := \bigwedge_{i=k+1}^{k+n} \constraints_{\M}(z)[i]\)
        \tcp*{Constraints resulting from query}
        \uIf(\tcp*[f]{Result query ``yes'' or ``no''}){$z \in L(\M)$}{
          \(H :=H \vee I\)
        }
        \(G := G \wedge \neg I \) \;
    }
    \textbf{return} \(H \)
    \caption{\texttt{ComputeCharacteristicPredicate}}\label{alg:decisionQuery}
\end{algorithm}
\begin{example}[\Cref{alg:decisionQuery}]
	Consider the FIFO-buffer example and the tree query with prefix $\push(5)~ \push(7)$ and symbolic suffix $\pop ~ \pop$.
	After the prefix location $l_2$ is reached.  From there, three paths are possible with actions $\pop ~ \pop$:
	$l_2 l_3 l_3$, $l_2 l_1 l_3$ and $l_2 l_1 l_0$. We consider an example run of \Cref{alg:decisionQuery}.
	
	Initially, $G_0 \equiv \top$ and $H_0 \equiv \perp$.  Let
	$\nu_1 = [ x_1 \mapsto 5, x_2 \mapsto 7, x_3 \mapsto 1, x_4 \mapsto 1]$.
	Then $\nu_1$ extends $\nu_u$ and $\nu_1 \models G_0$. The resulting tainted run corresponds to path $l_2 l_3 l_3$ and so the tainted query gives path constraint $I_1 \equiv x_3 \neq x_1 \wedge \top$. Since the tainted run is rejecting, $H_1 \equiv \perp$ and $G_1 \equiv \top \wedge \neg I_1$.
	
	In the next iteration, we set $\nu_2 = [ x_1 \mapsto 5, x_2 \mapsto 7, x_3 \mapsto 5, x_4 \mapsto 1]$.
	Then $\nu_2$ extends $\nu_u$ and $\nu_2 \models G_1$. The resulting tainted run corresponds to path $l_2 l_1 l_3$ and so the tainted query gives path constraint $I_2 \equiv x_3 = x_1 \wedge x_4 \neq x_2$. Since the tainted run is rejecting, $H_2 \equiv \perp$ and $G_2 \equiv \top \wedge \neg I_1 \wedge \neg I_2$.
	
	In the final iteration, we set $\nu_3 = [ x_1 \mapsto 5, x_2 \mapsto 7, x_3 \mapsto 5, x_4 \mapsto 7]$.
	Then $\nu_3$ extends $\nu_u$ and $\nu_3 \models G_2$. The resulting tainted run corresponds to path $l_2 l_1 l_0$ and the tainted query gives path constraint $I_3 \equiv x_3 = x_1 \wedge x_4 = x_2$. Now the tainted run is accepting, so $H_3 \equiv \perp \vee I_3$ and $G_3 = \top \wedge \neg I_1 \wedge \neg I_2 \wedge \neg I_3$. As $G_3$ is unsatisfiable, the algorithm terminates and returns characteristic predicate $H_3$.
\end{example}

\subsubsection{Construction of a non-minimal SDT\label{sec:nonMinimalSDT}}
For each tree query with prefix $u$ and symbolic suffix $w$, the corresponding characteristic predicate $H$ is sufficient to construct an SDT using~\Cref{alg:fullSDT}.

\begin{algorithm}[h]
  \KwData{Characteristic predicate $H$, index $n = k+1$,
    \\Number of suffix parameters $N$}
    \KwResult{Non-minimal SDT \(\tree \)}
    \uIf{\( n = k+N +1\)}{
        \(l_0 :=  \) SDT node\;
        \(z :=\) if $H \iff \bot $ then $-$ else $+$
        \tcp*{Value $\lambda$ for leaf node of the SDT}
        \textbf{return} \(\tuple{\set{l_0}, l_0, [l_0 \mapsto \emptyset], \varnothing, [l_0 \mapsto z]} \)
        \tcp*{RA with single location}
    }
    \uElse{
        \(\tree :=  \) SDT node\;
        \(I_t := \set{i \mid x_{n} \odot x_{i} \in H, \, n > i} \)
        \tcp*{$x_{i}$ may be a parameter or a constant}
        \uIf{\(I_t \) is \(\varnothing \)}{
          \(t :=  \) \texttt{SDTConstructor}\(\parens{H, n+1, N} \)
          \tcp*{No guards present}
          Add \(t \) with guard $\top$ to \(\tree \) \;
        }
        \uElse{
            \(g := \bigwedge_{i \in I_t} x_{n} \neq x_{i} \) \tcp*{Disequality guard case}
            \(H' := \bigvee_{f \in H} f \wedge g \) if $f \wedge g$ is
            satisfiable else $\bot$ \tcp*{$f$ is a disjunct}
            \(t' :=  \) \texttt{SDTConstructor}\(\parens{H', n+1, N} \) \;
            Add \(t' \) with guard $g$ to \(\tree \) \;
            \For{\(i \in I_t \)}{
              \(g := x_{n} = x_{i} \) \tcp*{Equality guard case}
              \(H' := \bigvee_{f \in H} f \wedge g \) if $f \wedge g$ is
            satisfiable else $\bot$ \;
                \(t':=\) \texttt{SDTConstructor}\(\parens{H', n+1, N} \) \;
                Add \(t' \) with guard $g$ to \(\tree \)
            }
        }
        \textbf{return} \(\tree \)
    }
    \caption{\texttt{SDTConstructor}}\label{alg:fullSDT}
\end{algorithm}
\iflong

\Cref{alg:fullSDT} proceeds in the following manner: for a symbolic action
\(\alpha\parens{x_{n}} \) with parameter \(x_{n} \), construct the \emph{potential set} \(I_t \) (lines 6 \& 7), that is,
the set of parameters to which $x_{n}$ is compared to in $H$.
For line $7$, recall that $H$ is a DNF formula, hence each literal $x_{j} \odot x_{k}$ is considered in the set comprehension, rather than the conjunctions making up the predicate $H$.
Each element $x_{i} \in I_{t}$ can be either a formal parameter in the tree query or a constant $c_{i} \in C$ from our chosen structure.
Using \(I_t \), we can construct the guards as follows:
\begin{itemize}
    \item \textbf{Disequality guard}: The disequality guard will be \(g :=
      \bigwedge_{\set{i \in I_t}} x_{n} \neq x_{i} \).
    We can then check which guards in $H$ are still satisfiable with the
    addition of $g$ and constructs the predicate $H'$ for the next call of
    \Cref{alg:fullSDT} (lines 13--16).

  \item \textbf{Equality guard (s)}: For each parameter $x_{i}$ for
    \(i \in I_t \), the equality guard will be \(g := x_{n} = x_{i} \).
    We can then check which guards in $H$ are still satisfiable with the
    addition of $g$ and this becomes the predicate $H'$ for the next call of
     \Cref{alg:fullSDT} (lines 18--21).
\end{itemize}

At the base case (lines $1-4$), there are no more parameters remaining
and we return a non-accepting leaf if $H = \bot$, otherwise accepting.
As mentioned, at each non-leaf location $l$ of the SDT $\tree$ returned
by~\Cref{alg:fullSDT}, there exists a potential set \(I_t\).
For each parameter $x_{i}$, we know that there is a comparison between
$x_i$ and $x_{n}$ in the SUT.

\else

We construct the SDT recursively while processing each action in the symbolic suffix $w= \alpha_{k+1} \cdots \alpha_{k+m}$ in order.
The valuation $\nu$ is unnecessary, as there are no guards defined over the prefix parameters.
During the execution of~\Cref{alg:fullSDT}, for a suffix action
\(\alpha(x_{n}) \), the \emph{potential set} \(I_t \) contains the set of parameters to which $x_{n}$ is compared to in $H$.
Each element in $I_{t}$ can be either a formal parameter in the tree query or a constant.
For each parameter $x_{i} \in I_{t}$ we construct an \emph{equality} sub-tree where $x_{n} = x_{i}$.
We also construct a \emph{disequality} sub-tree where $x_{n}$ is not equal to any of the parameters in $I_{t}$.
The base case (i.e., $w = \epsilon$) return an accepting or rejecting leaf node according to the characteristic predicate at the base case: if $H \iff \bot$ then rejecting, else accepting.
\Cref{example:fullSDT} provides a short explanation of~\Cref{alg:fullSDT}.
\fi

\begin{example}[\Cref{alg:fullSDT}\label{example:fullSDT}]
  Consider a characteristic predicate $H \equiv I_{1} \vee I_{2} \vee I_{3} \vee I_{4}$, where
  $I_{1} \equiv x_{2} \neq x_{1} \wedge x_{3} \neq x_{1}$,
  $I_{2} \equiv x_{2} = x_{1} \wedge x_{3} \neq x_{1}$,
  $I_{3} \equiv x_2 \neq x_{1} \wedge x_3 = x_{1}$,
  $I_{4} \equiv x_2 = x_{1} \wedge x_3 = x_{1}$.
  We discuss only the construction of the sub-tree rooted at node $s_{21}$
  for the SDT visualised in~\Cref{fig:nonMinimalSDT}; the construction of the
  remainder is similar.

  Initially, $x_{n} = x_{k+1} = x_{2}$.
  Potential set $I_{t}$ for $x_{2}$ is $\set{x_{1}}$ as $H$ contains the literals
  $x_{2} = x_{1}$ and $x_{2} \neq x_{1}$.
  Consider the construction of the equality guard $g := x_{2} = x_{1}$.
  The new characteristic predicate is
  $H' \equiv (I_{2} \wedge g) \vee (I_{4} \wedge g)$, as
  $I_{1}$ and $I_{3}$ are unsatisfiable when conjugated with $g$.

  For the next call, with $n=3$, the current variable is $x_{3}$,
  with predicate $H = H'$ (from the parent instance).
  We obtain the potential set for $x_{3}$ as $\set{x_{1}}$.
  The equality guard is $g' := x_{3} = x_{1}$ with the new characteristic predicate
  $H'' \equiv I_{4} \wedge g \wedge g'$, i.e., $H'' \iff x_{2} = x_{1} \wedge x_{3} = x_{1}$
  (note, $I_{2} \wedge g \wedge g'$ is unsatisfiable).
  In the next call, we have $n = 4$, thus we compute a leaf.
  As $H''$ is not $\bot$, we return an accepting leaf $t$.
  The disequality guard is $g'' := x_{3} \neq x_{1}$ with characteristic predicate
  $H''' \iff x_{2} = x_{1} \wedge x_{3} = x_{1} \wedge x_{3} \neq x_{1} \iff \bot$.
  In the next call, we have $n=4$, and we return a non-accepting leaf $t'$.
  The two trees $t$ and $t'$ are added as sub-trees with their respective guards $g'$ and $g''$
  to a new tree rooted at node $s_{21}$ (see~\Cref{fig:nonMinimalSDT}).
\fin{}
\end{example}


\subsubsection{SDT Minimisation}\label{sec:SDTminimisation}
\Cref{example:fullSDT} showed a characteristic predicate $H$ containing redundant comparisons, resulting in the non-minimal SDT in~\Cref{fig:nonMinimalSDT}.
We use~\Cref{alg:minimalSDT} to minimise the SDT in~\Cref{fig:nonMinimalSDT} to the SDT in~\Cref{fig:minimalSDT}.

\begin{algorithm}[h]
    \KwData{Non-minimal SDT \(\tree \), current index $n$}
    \KwResult{Minimal SDT \(\tree' \)}
    \uIf(\tcp*[h]{Base case}){\(\tree \) is a leaf}{
        \textbf{return} \(\tree \)
    }
    \uElse{
        \(\tree' :=  \) SDT node\\
        \tcp{Minimise the lower levels}
        \For{guard \(g \) with associated sub-tree \(t \) in \(\tree \)}{
            Add guard \(g \) with associated sub-tree \(\texttt{MinimiseSDT}(t, n+1) \) to \(\tree' \)
        }
        \tcp{Minimise the current level}
        \(I :=  \) Potential set of root node of \(\tree \)\\
        \(t' :=  \) disequality sub-tree of \(\tree \) with
        guard $\bigwedge_{i \in I} x_{n} \neq x_{i}$\\
        \(I' := \varnothing \)\\
        \For{\(i \in I \)}{
            \(t :=  \) sub-tree of \(\tree \) with guard \(x_{n} = x_{i} \)\\
            \uIf{\(t' \tuple{x_{i},x_{n}} \not\iso t  \) or
              \(t' \tuple{x_{i}, x_{n}}  \) is undefined}{
                \(I' := I' \cup \set{x_i} \)\\
                Add guard \(x_{n} = x_{i} \) with corresponding sub-tree
                \(t \) to \(\tree' \)
            }
        }
        Add guard \(\bigwedge_{i \in I'} x_{n} \neq x_{i}  \) with
        corresponding sub-tree \(t' \) to \(\tree' \)\\
        \textbf{return} \(\tree' \)
    }
    \caption{\texttt{MinimiseSDT}}\label{alg:minimalSDT}
\end{algorithm}

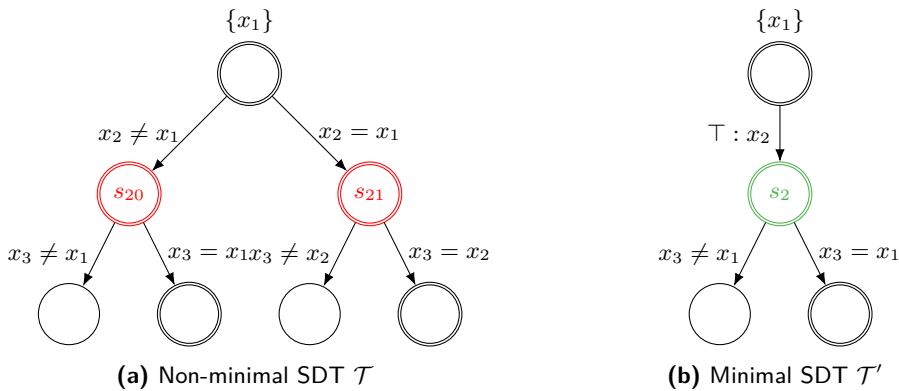
\begin{figure}[h]
  \centering
  \subcaptionbox{Non-minimal SDT $\tree$ \label{fig:nonMinimalSDT}}{\begin{tikzpicture}[scale=0.8]

\node[state] (rejecting1) at (0,0) {};
\node[state] (rejecting2) at (4,0) {};
\node[state, accepting] (accepting1) at (2,0) {};
\node[state, accepting] (accepting2) at (6,0) {};

\node[state, accepting, color=red] (s1_neq_r0) at (1,2) {$s_{20}$};
\node[state, accepting, color=red] (s1_eq_r0) at (5,2) {$s_{21}$};
\node[state, accepting, label=$\set{x_{1}}$] (root) at (3,4) {};

\path[-Latex] (root) edge [] node[left] {\(x_2 \neq x_1\)} (s1_neq_r0);
\path[-Latex] (root) edge [] node[right] {\(x_2 = x_1\)} (s1_eq_r0);

\path[-Latex] (s1_neq_r0) edge [] node[left] {\(x_3 \neq x_1\)} (rejecting1);
\path[-Latex] (s1_eq_r0) edge [] node[left] {\(x_3 \neq x_2\)} (rejecting2);

\path[-Latex] (s1_eq_r0) edge [] node[right] {\(x_3 = x_2\)} (accepting2);
\path[-Latex] (s1_neq_r0) edge [] node[right] {\(x_3 = x_1\)} (accepting1);
  
\end{tikzpicture}

  \hfill
  \subcaptionbox{Minimal SDT $\tree'$ \label{fig:minimalSDT}}{\begin{tikzpicture}[scale=0.8]

\node[state] (rejecting1) at (0,0) {};
\node[state, accepting] (accepting1) at (2,0) {};

\node[state, accepting, color=green!40!gray] (s2_root) at (1,2) {$s_{2}$};
\node[state, accepting, label=$\set{x_{1}}$] (root) at (1,4) {};

\path[-Latex] (root) edge [] node[left] {$\top : x_2$} (s2_root);

\path[-Latex] (s2_root) edge [] node[left] {\(x_3 \neq x_1\)} (rejecting1);
\path[-Latex] (s2_root) edge [] node[right] {\(x_3 = x_1\)} (accepting1);
  
\end{tikzpicture}

  \hfill
  \caption{SDT Minimisation:
    Redundant nodes (in red, left SDT) are merged together (in green, right SDT).
  \label{fig:minimisationExample}}
\end{figure}

We present an example of the application of~\Cref{alg:minimalSDT}, shown for the
SDT of~\Cref{fig:nonMinimalSDT}.
\Cref{fig:nonMinimalSDT} visualises a non-minimal SDT $\tree$, where \(s_{20}\) and
\(s_{21}\) (in red) are essentially ``duplicates'' of each other: the sub-tree for node
\(s_{20}\) is isomorphic to the sub-tree for node \(s_{21}\) under the relabelling
``\(x_2 = x_1\)''.
We indicate this relabelling using the notation $\tree[s_{20}]\tuple{x_1, x_2}$ and the
isomorphism relation under the relabelling as $\tree[s_{20}]\tuple{x_1, x_2} \iso \tree[s_{21}]$.
\Cref{alg:minimalSDT} accepts the non-minimal SDT of~\Cref{fig:nonMinimalSDT}
and produces the equivalent minimal SDT in~\Cref{fig:minimalSDT}.
Nodes \(s_{20}\) and \(s_{21}\) are merged into one node, \(s_2\), marked in
green.
We can observe that both SDTs still encode the same decision tree.
With~\Cref{alg:minimalSDT}, we have completed our tainted tree oracle, and can
now proceed to the tainted equivalence oracle.

\subsection{Tainted Equivalence Oracle\label{sec:taintedEquivalenceOracle}}

The \emph{tainted equivalence oracle} (TEO), similar to its non-tainted counterpart, accepts
a hypothesis $\mathcal{H}$ and verifies whether $\mathcal{H}$ is equivalent to
register automaton $\mathcal{M}$ that models the SUT\@.
If $\mathcal{H}$ and $\mathcal{M}$ are equivalent, the oracle replies ``yes'',
otherwise it returns ``no'' together with a CE.
The RandomWalk Equivalence Oracle in RALib constructs random traces in order to find a CE\@.

\begin{definition}[Tainted Equivalence Oracle]\label{def:taintedEquivalenceOracle}
 For a given hypothesis $\mathcal{H}$, maximum word length $n$, and an SUT $\mathcal{S}$, a tainted
 equivalence oracle is a function $\mathcal{O}_{\mathcal{E}}(\mathcal{H}, n,
 \mathcal{S})$ for all tainted traces $w$ of $\mathcal{S}$ where $| w | \leq n$,
 \( \oracle_{\mathcal{E}} (\mathcal{H}, n, \mathcal{S}) \) returns $w$ if \( w
 \in \lang(\mathcal{H}) \iff w \in \lang(\mathcal{S}) \) is false, and `Yes' otherwise.
 \end{definition}

The TEO is similar to the construction of the characteristic predicate to find a CE: we randomly generate a symbolic suffix of specified length $n$ (with an empty prefix), and construct a predicate $H$ for the query.
For each trace $w$ satisfying a guard in $H$, we confirm whether $w \in \lang(\mathcal{H}) \iff w \in \lang(\mathcal{M})$.
If false, $w$ is a CE\@.
If no $w$ is false, then we randomly generate another symbolic suffix.
In practise, we bound the number of symbolic suffixes to generate.
\Cref{example:combinationLock} presents a scenario of a combination lock
automaton that can be learned (relatively easily) using a TEO but cannot be handled by normal oracles.
\begin{example}[Combination Lock RA]\label{example:combinationLock}
  A combination lock is a type of RA which requires a \emph{sequence} of
  specific inputs to `unlock'.
  \begin{figure}[tbp]
    \centering
    \begin{tikzpicture}[->,>=stealth',shorten >=1pt,auto]
  \node[initial, state, accepting] (l_0) at (0,0) {\(l_0\)};
  \node[state, accepting] (l_1) at (2,0) {\(l_1\)};
  \node[state, accepting] (l_2) at (4,0) {\(l_2\)};
  \node[state, accepting] (l_3) at (6,0) {\(l_3\)};
  \node[state, accepting] (l_4) at (8,0) {\(l_4\)};

  \path[->] (l_0) edge [bend left=45] node[above] {\(\frac{\mathit{\alpha(p)} ~\mid~ p = 1}{\emptyset}\)} (l_1);

  \path[->] (l_1) edge [bend left=45] node[above] {\(\frac{\mathit{\alpha(p)} ~\mid~ p = 9}{\emptyset}\)} (l_2);

  \path[->] (l_2) edge [bend left=45] node[above] {\(\frac{\mathit{\alpha(p)} ~\mid~ p = 6}{\emptyset}\)} (l_3);
  
  \path[->] (l_3) edge [bend left=45] node[above] {\(\frac{\mathit{\alpha(p)} ~\mid~ p = 2}{\emptyset}\)} (l_4);

  \path[->] (l_4) edge [loop right] node[above] {$\beta$}(l_4);  

  \path[->] (l_0) edge [loop above] node[left] {\(\frac{\alpha(p) ~\mid ~ p \neq 1}{\emptyset} \)} (l_0);
  
  \draw[->] (l_1) -- ++(0,-1) -| (l_0) node[pos=0.25, below] {\(\frac{\alpha(p) ~\mid ~ p \neq 9}{\emptyset}\)};

  \draw[-] (l_2) -- ++(0,-1) -| ++(-2,0) node[pos=0.25, below] {\(\frac{\alpha(p) ~\mid ~ p \neq 6}{\emptyset}\)};

  \draw[-] (l_3) -- ++(0,-1) -| ++(-2,0) node[pos=0.25, below] {\(\frac{\alpha(p) ~\mid ~ p \neq 2}{\emptyset}\)};
  
\end{tikzpicture}

    \caption{Combination Lock \(\mathcal{C}\) : Sequence
      \(\alpha(1)\alpha(9)\alpha(6)\alpha(2)\) \emph{unlocks} the
      automaton.
    Error transitions (from \(l_3\) -- \(l_1\) to \(l_0\)) have been `merged' for conciseness.
    The sink state has not been drawn.\label{fig:combinationLock}
  }
  \end{figure}
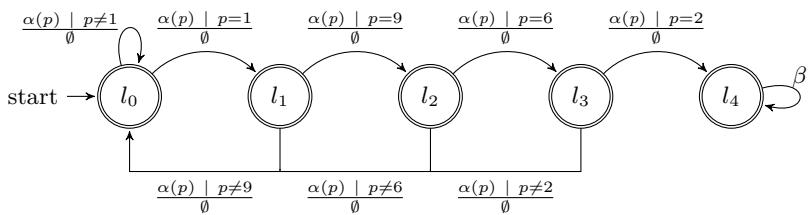
  \Cref{fig:combinationLock} presents an RA \(\mathcal{C}\) with a `4-digit'
  combination lock that can be unlocked by the
  sequence \( w = \alpha(c_0)\alpha(c_1)\alpha(c_2)\alpha(c_3)\), where \(\set{c_0, c_1, c_2, c_3}\) are constants.
  Consider a case where a hypothesis \(\mathcal{H}\) is being checked for
  equivalence against the RA \(\mathcal{C}\) with \(w \not\in
  \lang(\mathcal{H})\).
  While it would be difficult for a normal equivalence oracle to generate the word
  \(w\) randomly; the tainted equivalence oracle will record at every location
  the comparison of input data value \(p\) with some constant \(c_i\) and explore
  all corresponding guards at the location, eventually constructing the word
  \(w\). 

  For the combination lock automaton, we may note that as the `depth' of the lock increases, the possibility of randomly finding a CE decreases.
  \fin{}
\end{example}


%
\section{Experimental Evaluation}\label{sec:experimentalEval}
We have used stubbed versions of the Python FIFO-Queue and Set modules\footnote{From Python's
  \texttt{queue} module and standard library, respectively.} for learning the FIFO and Set models, while the Combination Lock automata were constructed manually.
Source code for all other models was obtained by translating existing benchmarks from~\cite{Neider2019} (see also \url{automata.cs.ru.nl}) to Python code.
We also utilise a `reset' operation:
A `reset' operation brings an SUT back to its initial state, and is counted as an `input' for our purposes.
Furthermore, each experiment was repeated 30 times with different random seeds.
Each experiment was bounded according to the following constraints:
learning phase: \(10^9\) inputs and \(5 \times 10^7\) resets;
testing phase: \(10^9\) inputs and \(5 \times 10^4\) resets;
length of the longest word during testing: 50; and
a ten-minute timeout for the learner to respond.

\Cref{fig:benchmarkPlots} gives an overview of our experimental results.
We use the notation `TTO' to represent `Tainted Tree Oracle' (with similar labels for the other
oracles).
In the figure, we can see that as the size of the container increases, the
difference between the fully tainted version (TTO+TEO, in
blue) and the completely untainted version (NTO+NEO, in red) increases.
In the case where only a tainted tree oracle is used (TTO+NEO, in green), we see
that it is following the fully tainted version closely (for the FIFO models) and
is slightly better in the case of the SET models.

\begin{figure}[h]
  \centering
  \includegraphics[scale=0.7]{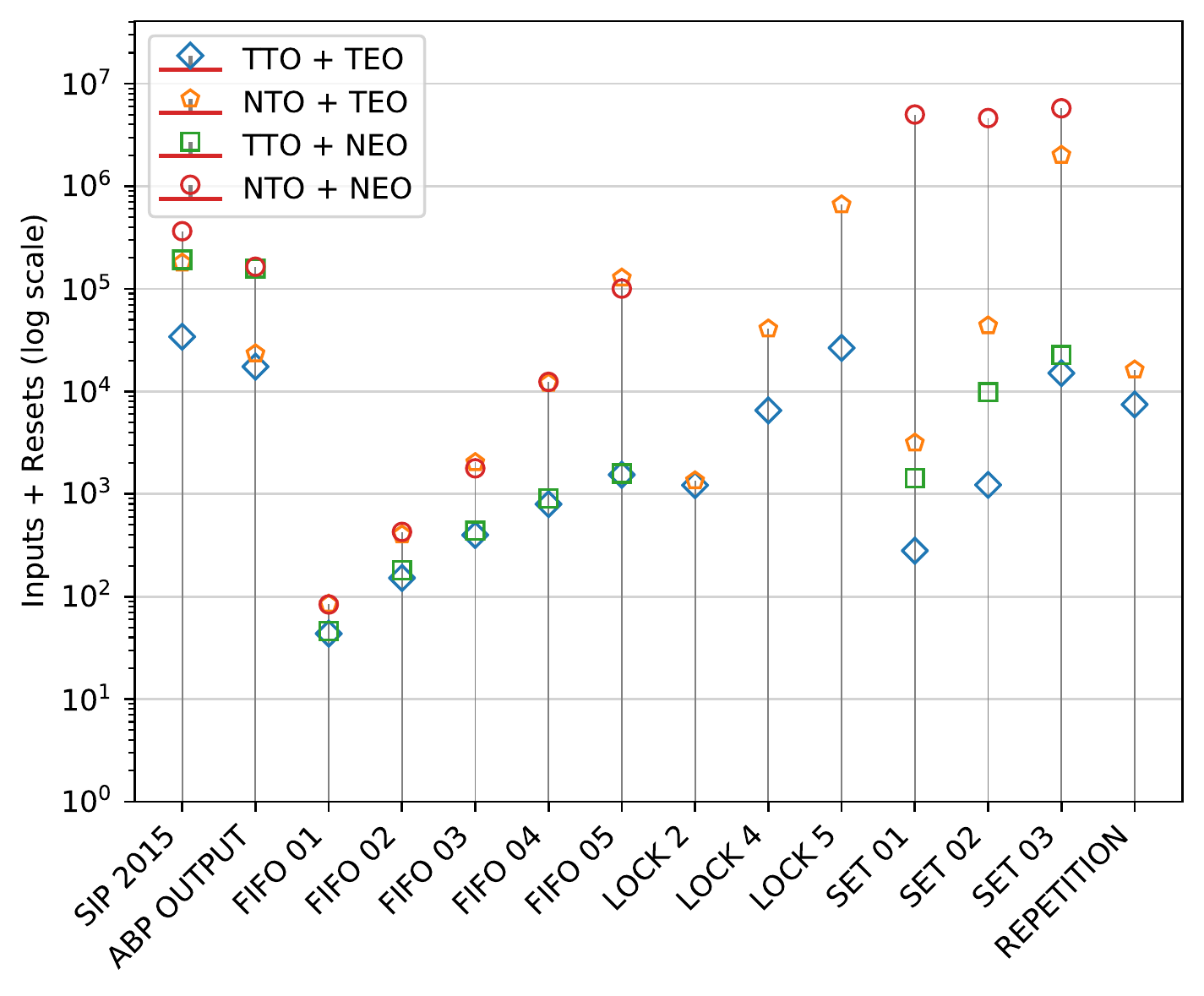}
  \caption{Benchmark plots: Number of symbols used with tainted
    oracles (blue and green) are generally \emph{lower} than with normal oracles
    (red and orange).
    Note that the y-axis is log-scaled.
    Additionally, normal oracles are unable to learn the Combination Lock and
    Repetition automata and are hence not plotted.
  }%
  \label{fig:benchmarkPlots}
\end{figure}

The addition of the TEO gives
a conclusive advantage for the Combination Lock and Repetition benchmarks.
The addition of the TTO by itself results in
significantly fewer number of symbols, even without the tainted equivalence
oracle (TTO v/s NTO, compare the green and red lines).
With the exception of the Combination Lock and Repetition benchmarks,
the TTO+TEO combination does not provide vastly better results in comparison to
the TTO+NEO results, however, it is still (slightly) better.
We note that --- as expected --- the NEO does not manage to provide CEs for the
Repetition and Combination Lock automata.
The TEO is therefore much more useful for finding CEs in SUTs which utilise
constants.
\iflong
For complete details of the data used to produce the plots, please refer to
Appendix~\ref{appendix:benchmarkTable}.
\fi




%

\section{Conclusions and Future Work}\label{sec:conclusion}

In this article, we have presented an integration of 
dynamic taint analysis, a white-box technique for tracing data flow, 
and register automata learning, a black-box technique for
inferring behavioral models of components. The combination 
of the two methods improves upon the state-of-the-art in 
terms of class of systems for which models can be generated 
and in terms of performance:
Tainting makes it possible to infer data-flow constraints even 
in instances with a high essential complexity (e.g., in the 
case of so-called combination locks).
Our implementation outperforms pure black-box learning by
two orders of magnitude with a growing impact in the
presence of multiple data parameters and registers.
Both improvements are important steps towards the 
applicability of model learning in practice
as they will help scaling to industrial use cases.



%
At the same time our evaluation shows the need for further
improvements:
Currently, the $\mathit{SL}^*$ algorithm uses symbolic decision trees and tree 
queries globally, a well-understood weakness of learning algorithms
that are based on observation tables. It also uses individual
tree oracles each type of operation and relies on  
syntactic equivalence of decision trees. A more advanced learning 
algorithm for extended finite state machines will 
be able to consume fewer tree queries, leverage semantic 
equivalence of decision trees. Deeper integration with 
white-box techniques could enable the analysis of many 
(and more involved) operations on data values.



\paragraph{Acknowledgement}
We are grateful to Andreas Zeller for explaining the use of tainting for dynamic tracking of constraints, and to Rahul Gopinath for helping us with his library for tainting Python programs.
We also thank the anonymous reviewers for their suggestions.

\iflong
\bibliographystyle{splncsnat}
\bibliography{reference,abbreviations,dbase}
\else
\newcommand{\SortNoop}[1]{}

\fi

\iflong
\clearpage
\begin{subappendices}
\renewcommand{\thesection}{\Alph{section}}%


\section{Tree Oracle for Equalities}\label{appendix:treeOracleEquality}

In this appendix, we prove that the tainted tree oracle generates 
SDTs which are isomorphic to the SDTs generated by the normal tree oracle as defined
in~\cite{Cassel2016}.
In order to do so, we first introduce the constructs used by~\citet{Cassel2016}
for generating SDTs.
We begin with some preliminaries:

%

For a word \(u\) with \(\mathit{Vals}(u) = d_1 \ldots d_k\), we define a \emph{potential} of \(u\).
The potential of \(u\), written as \(\mathit{pot}(u)\),
is the set of indices \(i \in \set{1, \ldots, k}\) for which
there exists no \(j \in \set{1, \ldots, k}\) such that \(j > i\) and \(d_i = d_j\).
The concept of potential essentially allows unique access to a data value,
abstracting away from the concrete position of a data value
in a word.
For a guard $g$ defined over $\mathcal{V}^{+}$ for a word $u$ with $\mathit{Vals}(u) = d_{1}, \ldots d_{k}$,
a representative data value $d^{g}_{u}$
is a data value s.t. $\nu(u) \cup \set{[p \mapsto d_{u}^{g}]} \satisfies g$.
Furthermore, for a word $w = \alpha \cdot w'$ (where $w'$ may be $\epsilon$)
, $w'$ can be represented as $\alpha\inv w$.
The same notation is also extended to sets of words: $\alpha\inv V = \set{\alpha\inv w \mid w \in V}$.


We may now define an SDT\@:
\begin{definition}[Symbolic Decision Tree]\label{def:SDT}
	A Symbolic Decision Tree (SDT) is a register automaton \(\tree = (L,l_0, \mathcal{X}, \Gamma, \lambda)\) where \(L\) and \(\Gamma \) form a tree rooted at \(l_0\).
\end{definition}
For location \(l\) of SDT \(\tree \), we write \(\tree[l]\) to denote the subtree of \(\tree \) rooted at \(l\).
An SDT that results from a tree query \( (u, w) \) (of a prefix word $u$ and a symbolic suffix $w$),
is required to satisfy some canonical form,
captured by the following definition.

\begin{definition}[\((u,w)\)-tree]\label{def:uVTree}
	For any data word $u$ with $k$ actions and any symbolic suffix $w$, a \((u,w)\)-tree is an SDT $\tree$ which has runs
	over all data words in $\dbracket{w}$, and which satisfies the following
	restriction: whenever \( \tuple{l, \alpha(p), g, \pi, l' } \) is the
	\(j^{\mathit{th}}\) transition on some path from $l_0$, then for each
	\( x_i \in \mathcal{X}(l') \) we have either \emph{(i)} \( i < k + j
	\) and \( \pi (x_i) = x_i\), or \emph{(ii)} \( i = k + j \) and \( \pi (x_i) =
	p \).
\end{definition}

If $u = \alpha(d_1) \cdots \alpha_k(d_k)$ is a data word then $\nu_u$ is the valuation of $\set{x_1,\ldots,x_k}$ satisfying $\nu_u(x_i) = d_i$, for $1 \leq i \leq k$.  Using this definition, the notion of a \emph{tree oracle},
which accepts tree queries and returns SDTs, can be described as follows.

\begin{definition}[Tree Oracle]\label{def:treeOracle}
	A tree oracle for a structure $\Structure $ is a function $\oracle$ which,
	for a data language $ \lang $, prefix word $u$ and symbolic
	suffix $w$ returns a \( ( u, w ) \)-tree \( \oracle (\lang, u, w)\) s.t. for any
	word \(  v \in \dbracket{w}\), the following holds: $v$ is accepted by \(
	\oracle(\lang, u, w)\) under $\nu_u$ iff \( u \cdot v \in \lang \).
\end{definition}

A tree oracle returns \emph{equality trees}, defined below:
\begin{definition}[Equality Tree]\label{def:equalityTree}
  An \emph{equality tree} for a tree query $(u, V)$ is a $(u,V)$-tree $\tree$ such that:
  \begin{itemize}
    \item for each action $\alpha$, there is a potential set $I \subseteq
      \mathit{pot}(u)$ of indices such that the initial $\alpha$-guards consist
      of the equalities of form $p = x_i$ for $i \in I$ and one disequality of
      form $\wedge_{i \in I } p \neq x_i$, and
      \item for each initial transition $\tuple{l_0, \alpha(p), g, l}$ of
        $\tree$, the tree $\tree[l]$ is an equality tree for $(u \alpha(d^g_u),
        \alpha\inv V)$.
  \end{itemize}
\end{definition}

\citet{Cassel2016} require their (equality trees) SDTs to be \emph{minimal} (called \emph{maximally abstract} in~\cite{Cassel2016}), i.e., the
SDTs must not contain any redundancies (such as~\Cref{fig:nonMinimalSDT}).
This can be achieved by checking if two sub-trees are equal under some relabelling, and the process of constructing a tree by relabelling an equality sub-tree is called
\emph{specialisation of equality tree}:
\begin{definition}[Specialisation of equality tree]\label{def:treeOracleSpecialisation}
    Let \( \tree \) be an equality tree for prefix \(u\) and set of symbolic suffixes \(V\), and let \(J \subseteq pot(u)\) be a set of indices.
    Then \(\tree \tuple{J}\) denotes the equality tree for \((u,V)\) obtained from \(\tree \) by performing the following transformations for each \(\alpha \):

    \begin{itemize}
        \item Whenever \(\tree \) has several initial \(\alpha \)-transitions of
          form \(\tuple{ l_0, \alpha(p), (p=x_j), l_j }\) with \(j \in J\), then
          all subtrees of form \((\tree [l_j]) \tuple{J [(k+1) \mapsto j]}\) for \(j \in J\) must be defined and isomorphic, otherwise \(\tree \tuple{J}\) is undefined.
              If all such subtrees are defined and isomorphic, then \(\tree \tuple{J}\) is obtained from \(\tree \) by

              \begin{enumerate}
                  \item replacing all initial \(\alpha \)-transitions of form \(\tuple{l_0, \alpha(p), (p=x_j), l_j}\) for \(j \in J\) by the single transition \(\tuple{l_0, \alpha(p), (p=x_m),l_m}\) where \(m = \max(J)\),

                  \item replacing \(\tree [l_m]\) by \((\tree [l_m])
                    \tuple{J[(k+1) \mapsto m]}\), and

                  \item replacing all other subtrees \(\tree [l']\) reached by initial \(\alpha \)-transitions (which have not been replaced in  Step \(1\) by \((\tree [l']) \tuple{J}\).
              \end{enumerate}
    \end{itemize}
\end{definition}

If, for some \(\alpha \), any of the subtrees generated in Step \(2\) or \(3\) are undefined, then \(\tree \tuple{J}\) is also undefined, otherwise \(\tree \tuple{J}\) is obtained after performing Steps \(1-3\) for each \(\alpha \).

\begin{definition}[Necessary Potential set for Tree Oracle]\label{def:necessaryPotSet}
    A necessary potential set \(I\) for the root location \(l_0\) of an equality
    tree \(\oracle(\lang, u,V)\) is a subset of \(\mathit{pot}(u)\) such that for each index \(i \in I\) the following holds:
    \begin{enumerate}
        \item \(\oracle(\lang, u\alpha(d^0_u), V_\alpha) \tuple{\set{i, k+1}}\) is undefined, or
        \item \(\oracle(\lang, u\alpha(d_u^0), V_\alpha) \tuple{\set{i, k+1}}
          \not\iso\oracle(\lang, u\alpha(d_i), V_\alpha)\).
    \end{enumerate}
\end{definition}

Intuitively, a necessary potential set contains indices of data values which influence future behaviour of the SUT\@.
Consequently, indices of data values which do not influence the behaviour of the SUT are excluded from the necessary potential set.
We are now ready to define the tree oracle for equality:
%

\begin{definition}[Tree oracle for equality]\label{def:treeOracleEquality}
    For a language \(\lang\), a prefix \(u\), and the set of symbolic suffixes
    \(V\), the equality tree \(\mathcal{O}(\lang, u,V)\) is constructed as follows:
    \begin{itemize}
        \item If \(V = \set{\epsilon}\), then \(\oracle(\lang, u,\set{\epsilon})\) is the trivial tree with one location \(l_0\) and no registers.
              It is accepting if the word is accpeted, i.e., \(\lambda(l_0) = +\) if \(u \in \lang \), else \(\lambda(l_0) = -\).
              To determine \(u \in \lang \), the tree oracle performs a membership query on \(u\).

        \item If \( V \neq \set{\epsilon}\), then for each \( \alpha \) such that \(V_\alpha = \alpha\inv V\) is non-empty,

              \begin{itemize}
                  \item let \(I\) be the necessary potential set (\Cref{def:necessaryPotSet}),

                  \item \( \oracle(\lang, u,V)\) is constructed as
                    \(\oracle(\lang, u,V) = (L, l_0, \Gamma, \lambda) \), where,
                    letting\\ \(\oracle(\lang, u\alpha(d_i), V_\alpha) \) be the tuple \( (L_i^\alpha, l_{0i}^\alpha, \Gamma_{i}^\alpha, \lambda_i^\alpha) \) for \( i \in (I \cup \set{0}) \),

                        \begin{itemize}
                            \item \(L\) is the disjoint union of all \(L_i^\alpha \) plus an additional initial location \( l_0\),

                            \item \( \Gamma \) is the union of all \( \Gamma_i^\alpha \) for \( i \in (I \cup \set{0})\), and in addition the transitions of form \(\tuple{l_0, \alpha(p), g_i, l_{0i}^\alpha}\) with \(i \in (I \cup \set{0})\), where \(g_i\) is \( \bigwedge_{j \in I} p \neq x_j \) for \( i = 0 \), and \(g_i\) is \(p = x_i\) for \(i \neq 0\), and

                            \item \(\lambda \) agrees with each \(\lambda_i^\alpha \) on \(L_i^\alpha \).
                                  Moreover, if \(\epsilon \in V\), then \(\lambda(l_0) = + \) if \(u \in \lang \), otherwise \(\lambda(l_0)=-\).
                                  Again, to determine whether \(u \in \lang \), the tree oracle performs a membership query for \(u\).
                        \end{itemize}
              \end{itemize}
    \end{itemize}
\end{definition}

\noindent
Intuitively, \(\oracle(\lang, u,V)\) is constructed by joining the trees
\(\oracle(\lang, u\alpha(d_i), V_\alpha)\) with guard \(p = x_i\) for \(i \in
I\), and the tree \(\oracle(\lang, u\alpha(d_u^0), V_\alpha)\) with guard \(\bigwedge_{i \in I} p \neq x_i\), as children of a new root.
Note, while $V$ is a set of symbolic suffixes, RALib technically
handles tree queries sequentially, i.e., as sequential tree queries of prefix $u$ and symbolic suffix $w$.
Consequently, we treat the set of symbolic suffixes $V$ as a singleton,
referred to as `$w$'.

\(\oracle(\lang, u,w)\) is constructed bottom-up, recursively building new `roots' at the top with larger and larger symbolic suffixes (and consequently, shorter and shorter prefixes).
The choice of the necessary potential set \(I\) plays a crucial role: if \(I\)
is larger than necessary, \(\oracle(\lang, u,w)\) contains redundant guards (and is
hence a `non-minimal' SDT).

We now have a clear goal for our proof: we must show that the SDT returned
by~\Cref{alg:minimalSDT} is isomorphic to the SDT returned by the tree oracle for equality as defined
in~\Cref{def:treeOracleEquality} (under the assumption that the `set' of symbolic
suffixes $V$ is a singleton).
We can divide our proof into the following steps:
\begin{enumerate}
  \item We show that~\Cref{alg:decisionQuery} produces a characteristic predicate
    for tree query $(u,w)$, and contains all the information for constructing an
    equality tree,
  \item Next, we show that~\Cref{alg:fullSDT} guarantees that for  potential
    set $I_t$ of a location $l_t$ of the tainted equality tree $\tree_t$, the
    potential set $I$ of equivalent location $l$ of the normal equality tree
    $\tree$ is a subset of $I_{t}$: $I \subseteq I_t$, and finally,
  \item We can then reduce the make the tainted potential set equal to the
    normal potential set (using~\Cref{alg:minimalSDT}) and the resulting tainted equality tree
    will be isomorphic to the normal equality tree.
\end{enumerate}

Each of the above steps correspond to one of our algorithms.
We now begin with step 1: from~\Cref{alg:decisionQuery}, we can state the
following lemmas:
\begin{lemma}[Characteristic Predicate]\label{lemma:characteristicPredicate}
For a tree query $(u,w)$, \Cref{alg:decisionQuery} always produces a
characteristic predicate $H$.
\end{lemma}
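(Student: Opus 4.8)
The plan is to establish two things about \Cref{alg:decisionQuery} on an arbitrary input $(u,w)$ with $u = \alpha_1(d_1)\cdots\alpha_k(d_k)$ and $w = \alpha_{k+1}\cdots\alpha_{k+n}$: that the while loop terminates, and that the returned formula $H$ satisfies the defining equivalence of a characteristic predicate, i.e.\ $\nu \models H \iff \alpha_1(\nu(x_1))\cdots\alpha_{k+n}(\nu(x_{k+n})) \in L(\M)$ for every valuation $\nu$ of $\set{x_1,\ldots,x_{k+n}}$ extending $\nu_u$. The structural fact I would set up first is this: let $l_u$ be the (unique, by determinism) location reached by the run of $\M$ on $u$; since $\M$ is deterministic and completely specified, the data words $z = \alpha_1(\nu(x_1))\cdots\alpha_{k+n}(\nu(x_{k+n}))$ with $\nu$ extending $\nu_u$ all have a unique tainted run, whose portion after $l_u$ ranges over a \emph{finite} set of transition sequences, which I call \emph{paths}. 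For each path $P$, the conjunction $I_P := \bigwedge_{i=k+1}^{k+n} g_i[\kappa_i]$ of its substituted guards is a guard over $\set{x_1,\ldots,x_{k+n}}$ that depends only on $P$ (not on the representative $\nu$, since the marker substitutions $\kappa_i$ are built only from the per-transition assignments and the positional markers $v_1,\ldots,v_{k+n}$, with $x_i$ identified with $v_i$), and $\nu \models I_P$ iff the tainted run on $z$ follows $P$. Consequently the family $\set{I_P}_P$ partitions the valuations extending $\nu_u$, and all words whose run follows a fixed $P$ share one accept/reject verdict, determined by the final location of $P$. This $I_P$ is exactly the formula $I$ computed in \Cref{alg:decisionQuery} from $\constraints_{\M}(z)$.

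Termination then follows immediately: iteration $j$ picks a valuation $\nu_j \models G_{j-1}$, reads off its path $P_j$ and the corresponding constraint $I_j = I_{P_j}$, and the update $G_j := G_{j-1} \wedge \neg I_j$ ensures that no valuation following $P_j$ satisfies $G_j$. Hence distinct iterations explore pairwise distinct paths, so the number of iterations is bounded by the finite number of paths from $l_u$ over the action sequence $w$.

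For correctness I would use the loop invariant that after $j$ iterations, for every valuation $\nu$ extending $\nu_u$: \textbf{(a)} $\nu \models G_j$ iff the path of $\nu$ is none of $P_1,\ldots,P_j$; and \textbf{(b)} $\nu \models H_j$ iff the word of $\nu$ lies in $L(\M)$ and the path of $\nu$ is one of $P_1,\ldots,P_j$. The base case $j=0$ holds as $G_0 = \top$, $H_0 = \bot$. In the step, (a) gives $P_j \notin \set{P_1,\ldots,P_{j-1}}$; replacing $G_{j-1}$ by $G_{j-1}\wedge\neg I_j$ removes exactly the valuations on $P_j$, preserving (a); and $H_j$ gains the disjunct $I_j$ precisely when the membership query on $z$ answers ``yes'', i.e.\ when the common verdict on $P_j$ is ``accept'', which adds exactly the accepting valuations on $P_j$ (and if the verdict is ``reject'' then $H_j = H_{j-1}$ and no valuation on $P_j$ is in $L(\M)$), so (b) is preserved. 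When the loop exits after $N$ iterations, $G_N$ is unsatisfiable by extensions of $\nu_u$, so by (a) every path occurs among $P_1,\ldots,P_N$; then (b) reduces to $\nu \models H_N \iff \alpha_1(\nu(x_1))\cdots\alpha_{k+n}(\nu(x_{k+n})) \in L(\M)$, as required. The degenerate case $n = 0$ is subsumed: the sole iteration yields $I = \top$ (empty conjunction) and $H$ becomes $\bot \vee \top$ or stays $\bot$ according to whether $u \in L(\M)$.

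The main obstacle is the first paragraph's bookkeeping: making precise that $I_P$ is a well-defined guard over $\set{x_1,\ldots,x_{k+n}}$ independent of the chosen representative, and that $\nu \models I_P$ is equivalent to ``the tainted run on $z$ follows $P$''. This is where one must unwind the definitions of tainted run and of $\constraints_{\M}$ (tracking that $\kappa_i$ maps $p$ to $v_i$ and that registers are carried along by the fixed assignments $\pi$, so along a fixed transition sequence the substituted guards are fixed), and invoke determinism and complete specification of $\M$ to get that following $P$ is equivalent to satisfying the conjunction of its guards. Once that correspondence is pinned down, the termination count and the invariant induction are routine.
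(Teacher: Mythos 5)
Your proposal is correct and follows essentially the same route as the paper's proof: finitely many paths through $\M$ after the prefix, each iteration's update $G := G \wedge \neg I$ forcing a fresh path (hence termination), and $H$ accumulating exactly the constraints of the accepting paths. You merely spell out what the paper leaves implicit --- the path/constraint correspondence ($\nu \models I_P$ iff the tainted run follows $P$) and an explicit loop invariant --- so your argument is a more detailed rendering of the same proof, not a different one.
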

\begin{proof}
  We recall that, under the test hypothesis, an SUT $\M$ is deterministic and
  has a finite number of logically disjoint branches to be followed from
  each state.
  \Cref{alg:decisionQuery} initialises two variables $G := \top$ and $H := \bot$.
  For each word $z = u \cdot w$ under a valuation $\nu \satisfies G$, we may
  perform a membership query on $\M$.
  Each query returns the guard $I = \wedge_{i=k+1}^{k+n} \constraints_{\M}(z)[i]$
  such that $\nu \satisfies I$ and the acceptance of the word $z$ in the language
  of $\M$, i.e., $z \in \M$.

  For each iteration of the do-while loop, the variable $G$ is updated with the
  negation of the previously-satisfied guard $I$, i.e., $G := G \wedge \neg I$.
  This guarantees that any new valuation $\nu'$ will not satisfy $I$, and hence,
  the next iteration of the do-while loop shall induce a different run of $\M$.
  Given that $\M$ only has a finite number of logical branches,
  \Cref{alg:decisionQuery} terminates.

  We also know that for each tainted word $z$, we obtain the acceptance of $z \in L(\M)$.
  If $z \in L(\M)$, the variable $H$ is updated to $H \vee I$.
  Therefore, the predicate $H$ returned by \Cref{alg:decisionQuery} is the
  characteristic predicate for the tree query $(u,w)$.
  \qed{}
\end{proof}

After constructing the characteristic predicate, we convert it to a non-minimal SDT
using \Cref{alg:fullSDT}, providing us with the following lemma:
\begin{lemma}[Non-minimal SDT]\label{lemma:nonMinimalSDT}
    For any location \(l_t \) of a non-minimal SDT with an equivalent location
    \(l \) of a minimal SDT, the necessary potential set \(I_t \) of the
    non-minimal SDT is a superset of the necessary potential set \(I \) of the
    minimal SDT\@: \(I \subseteq I_t \subseteq \mathit{pot}(u) \) where
    \(\mathit{pot}(u) \) is the potential of the prefix \(u\) of locations $l_t$
    and $l$.
\end{lemma}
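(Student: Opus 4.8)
The plan is to compare, level by level, the potential set $I_t$ computed by \Cref{alg:fullSDT} at a location $l_t$ with the necessary potential set $I$ of the corresponding location $l$ in the minimal equality tree of \Cref{def:treeOracleEquality}. I would proceed by induction on the length of the symbolic suffix still to be processed (equivalently, on the height of the subtree rooted at $l_t$), so that the two trees can be put in correspondence from the leaves upward. The base case is immediate: when the suffix is empty both $l_t$ and $l$ are leaves with empty potential sets, and a leaf's $\lambda$-value is determined solely by whether the characteristic predicate restricted to that branch is $\bot$ — which \Cref{lemma:characteristicPredicate} guarantees \Cref{alg:decisionQuery} computes correctly — so the leaves agree.

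For the inductive step, fix a suffix action $\alpha(x_n)$ at $l_t$. By definition (line~7 of \Cref{alg:fullSDT}), $I_t = \{\, i \mid x_n \odot x_i \text{ occurs as a literal in } H,\ n > i \,\}$, i.e.\ $I_t$ collects exactly those earlier parameters (or constants) against which $x_n$ is compared \emph{syntactically} somewhere in the characteristic predicate $H$. The first thing to establish is $I_t \subseteq \mathit{pot}(u)$: since the constraints harvested by tainting come from actual comparisons performed by the SUT on the register contents, and registers only ever hold ``live'' data values (those in the potential), every index appearing in such a literal lies in $\mathit{pot}(u)$; indices of stale/duplicated data values never surface as comparison targets. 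The harder inclusion is $I \subseteq I_t$. Here I would argue contrapositively: if $i \notin I_t$, then $x_n$ is never compared to $x_i$ anywhere in $H$, so $H$ — and hence the accepted suffix set it characterizes — is insensitive to whether $x_n$ equals $x_i$ or not. Consequently, the ``$p = x_i$'' branch and the generic disequality branch built by \Cref{alg:fullSDT} yield logically identical residual predicates $H'$, hence (by the inductive hypothesis applied to those residuals) isomorphic subtrees; this is precisely the situation in which \Cref{def:necessaryPotSet} declares $i$ \emph{not} necessary, so $i \notin I$. Therefore $I \subseteq I_t$.

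Two points need care. First, the correspondence ``equivalent location $l_t$ of the non-minimal SDT and $l$ of the minimal SDT'' must be pinned down precisely — I would define it as: $l_t$ and $l$ are reached by the same sequence of equality/disequality choices (up to the relabelling induced by specialisation, \Cref{def:treeOracleSpecialisation}), and they sit over the same prefix $u$ with the same $\mathit{pot}(u)$. Second, and this is where I expect the main obstacle, one must make rigorous the step ``$x_n$ not syntactically compared to $x_i$ in $H$ $\Rightarrow$ the two branches give isomorphic subtrees.'' The subtlety is that $H$ is in DNF and a single disjunct may mention $x_i$ through some \emph{other} parameter $x_j$ with $j>n$; I would need to check that such indirect occurrences cannot make the $x_n = x_i$ vs.\ $x_n \neq x_i$ distinction matter, which follows because the residual-predicate construction in \Cref{alg:fullSDT} conjoins exactly the chosen guard on $x_n$ and drops unsatisfiable disjuncts — an operation that commutes with whether $x_i$ is the literal's partner. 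Formalizing this commutation, together with threading the inductive hypothesis through the ``$\bigvee_{f\in H} f\wedge g$ if satisfiable else $\bot$'' rewriting, is the technical heart of the argument; everything else is bookkeeping on tree shapes.
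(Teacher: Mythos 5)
Your proposal is correct and takes essentially the same route as the paper's own proof: $I \subseteq \mathit{pot}(u)$ by definition, and $I \subseteq I_t$ because an index absent from $I_t$ cannot influence the classification encoded by the characteristic predicate $H$ (via \Cref{lemma:characteristicPredicate} and \Cref{alg:fullSDT}), hence cannot be necessary. Where the paper compresses this into the remark that otherwise ``the suffixes will not be classified correctly,'' you unpack the same idea contrapositively through \Cref{def:necessaryPotSet} with an induction on the suffix length, which is a more explicit rendering of the paper's argument rather than a different one.
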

\begin{proof}
    We know that \(I \subseteq \mathit{pot}(u) \) by definition of the necessary potential set.
    For any word $w = u \cdot v$ where the prefix $u$ leads to location $l_t$ of
    the tainted non-minimal SDT,
    \Cref{alg:fullSDT} guarantees that the
    suffixes of $u$ will be classified correctly.
    If the suffixes are classified correctly, we derive that $I_t \supseteq I$
    (otherwise the suffixes will not be classified correctly).
    Since \(I_t \supseteq I\) and \(I, I_t \subseteq \mathit{pot} (u)\), we
    conclude \(I \subseteq I_t \subseteq \mathit{pot} (u)\).
    \qed{}
\end{proof}

Following~\Cref{lemma:nonMinimalSDT}, if we wish to make \(I = I_t \), we can
simply remove all elements from \(I_t \) which do not satisfy the conditions
outlined in~\Cref{def:necessaryPotSet}.
Since we already know that \(I \subseteq I_t \), we can confirm that after
removal of all irrelevant parameters, \(I = I_t \).
\Cref{alg:minimalSDT} accomplishes the same.

\citet{Cassel2016} use the concept of representative data values for
constructing the SDT, while we treat the values symbolically: a representative
data value `represents' the set of data values that satisfy a guard during
construction of the SDT; in our case, we simply let \texttt{Z3} decide on all
the values to use for our membership queries and obtain the guards about them
using their taint markers as identifiers.

\begin{theorem}[Isomorphism of tree oracles]\label{theorem:isoTreeOracles}
  The SDTs generated by the tainted tree oracle and the untainted tree oracle
  for a tree query $(u,w)$ are isomorphic.
\end{theorem}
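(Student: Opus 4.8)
The plan is to prove the theorem by induction on the length of the symbolic suffix $w$, using the three results already in hand — \Cref{lemma:characteristicPredicate} for \Cref{alg:decisionQuery}, \Cref{lemma:nonMinimalSDT} for \Cref{alg:fullSDT}, and the discussion of \Cref{alg:minimalSDT} — as the three ingredients of the induction. Fix a prefix $u$ with $k$ actions. Write $\tree_t$ for the SDT obtained by running \Cref{alg:decisionQuery}, \Cref{alg:fullSDT} and \Cref{alg:minimalSDT} in sequence on $(u,w)$, and $\tree = \oracle(\lang,u,w)$ for the equality tree of \Cref{def:treeOracleEquality}. First I would check that $\tree_t$ is a genuine $(u,w)$-tree in the sense of \Cref{def:uVTree}: this is immediate from the ``$x_i$ stores the $i^{\mathit{th}}$ data value'' convention, which forces every transition update to satisfy clause (i) or (ii) of that definition, so that ``isomorphism of register automata'' is the right comparison to make. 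For the base case $|w| = 0$, both $\tree_t$ and $\tree$ are single-location register automata with no registers; by \Cref{lemma:characteristicPredicate} the predicate $H$ returned by \Cref{alg:decisionQuery} satisfies $H \iff \bot$ exactly when $u \notin L(\M)$, so the base case of \Cref{alg:fullSDT} labels the unique location $+$ iff $u \in L(\M)$, which is precisely the labelling rule in \Cref{def:treeOracleEquality}. Hence $\tree_t \iso \tree$.

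For the inductive step I would first handle the \emph{non-minimal} tainted tree. By \Cref{lemma:nonMinimalSDT}, at the root the potential set $I_t$ produced by \Cref{alg:fullSDT} satisfies $I \subseteq I_t \subseteq \mathit{pot}(u)$, where $I$ is the necessary potential set of the root of $\tree$. Moreover, the predicate $H' := \bigvee_{f \in H} f \wedge g$ computed in the recursive calls of \Cref{alg:fullSDT} along the branch with guard $g$ (either $x_n = x_i$ or the disequality) is, after specialising the free data value to the witness chosen by Z3, a characteristic predicate for the extended prefix $u\,\alpha(d)$ and the shorter symbolic suffix $\alpha\inv w$ — again via \Cref{lemma:characteristicPredicate}. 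This is what licenses applying the induction hypothesis to each subtree of $\tree_t$ below the root and to the corresponding $\oracle(\lang, u\alpha(d_i), V_\alpha)$.

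The crux is then to show that \Cref{alg:minimalSDT} deletes \emph{exactly} the indices in $I_t \setminus I$. I would argue that the test in \Cref{alg:minimalSDT}, namely ``$t'\tuple{x_i,x_n} \not\iso t$ or $t'\tuple{x_i,x_n}$ undefined'', is the algorithmic counterpart of conditions (1)--(2) of \Cref{def:necessaryPotSet}: the relabelled disequality subtree $t'\tuple{x_i,x_n}$ plays the role of $\oracle(\lang, u\alpha(d^0_u), V_\alpha)\tuple{\{i,k+1\}}$ (unfolding \Cref{def:treeOracleSpecialisation}), while the equality subtree $t$ under guard $x_n = x_i$ plays the role of $\oracle(\lang, u\alpha(d_i), V_\alpha)$. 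So index $i$ survives minimisation iff it is necessary, whence the root potential set of the minimised $\tree_t$ equals $I$; its children coincide up to isomorphism with those of $\tree$ by the induction hypothesis, and the new root guards $\bigwedge_{i \in I} x_n \neq x_i$ and $x_n = x_i$ are literally the guards $g_i$ of \Cref{def:treeOracleEquality}. This yields $\tree_t \iso \tree$ and closes the induction.

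I expect the third paragraph to be the main obstacle. The delicate points are: (a) aligning the purely syntactic, bottom-up minimisation test of \Cref{alg:minimalSDT} with the semantically phrased necessary-potential-set condition, in particular reconciling the ``undefined'' branch of \Cref{def:treeOracleSpecialisation} (when the subtrees to be relabelled are not simultaneously defined and isomorphic) with the fact that \Cref{alg:minimalSDT} only ever attempts to relabel subtrees it has actually built; and (b) confirming that the representative-data-value choices made implicitly by Z3 inside \Cref{alg:decisionQuery} do not affect the resulting tree up to isomorphism — i.e.\ that the constraints harvested by tainting depend only on which branch of $\M$ is taken, not on the concrete witness — which is where the assumption that the structure contains only equality and constant predicates is used.
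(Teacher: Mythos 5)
Your proposal follows essentially the same route as the paper's own proof: it chains \Cref{lemma:characteristicPredicate}, \Cref{alg:fullSDT} with \Cref{lemma:nonMinimalSDT}, and \Cref{alg:minimalSDT} to conclude that the root locations carry the same necessary potential set, and then inducts on the depth of the tree (equivalently, the length of $w$) to propagate this to all subtrees and obtain the isomorphism. Your write-up is in fact more explicit than the paper's short argument, which simply asserts that \Cref{alg:minimalSDT} removes exactly the indices violating \Cref{def:necessaryPotSet}; the ``crux'' and the delicate points (a) and (b) you flag are left implicit in the paper as well.
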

\begin{proof}
  \Cref{lemma:characteristicPredicate} guarantees that \Cref{alg:decisionQuery}
  returns a characteristic predicate $H$ for the tree query $(u,w)$.
  Application of \Cref{alg:fullSDT} on $H$ constructs a non-minimal SDT\@.
  Using~\Cref{lemma:nonMinimalSDT} and~\Cref{alg:minimalSDT} on the
  non-minimal SDT, we can conclude that the root locations of the tainted tree
  oracle and normal tree oracle have the same necessary potential set.
  By inductive reasoning on the depth of the trees, the same holds for all
  sub-trees of both oracles, eventually reducing to the leaves,
  showing that the tainted tree oracle is isomorphic to tree oracle.
  \qed{}
\end{proof}


\clearpage
\section{Detailed Benchmark results}\label{appendix:benchmarkTable}

\Cref{tbl:benchmarks} contains the full results of the values used to create the
plots from~\Cref{fig:benchmarkPlots}.

\begin{small}
\begin{longtable}{|l|r|r|rrrr|}
\caption{Benchmarks}\label{tbl:benchmarks}\\
\toprule
      Model & Tree Oracle & EQ Oracle & Learn Symbols & Test Symbols & Total Symbols & Learned \\
      & & & (Std. Dev) & (Std. Dev) & (Std. Dev) &  \\
\midrule
\endhead
\midrule
\multicolumn{7}{r}{{Continued on next page}} \\
\midrule
\endfoot

\bottomrule
\endlastfoot
 Abp Output &     Tainted &    Normal &      6.55E+02 &     1.57E+05 &      1.58E+05 &   30/30 \\
            &             &           &    (8.33E+01) &   (1.29E+05) &    (1.29E+05) &         \\
 Abp Output &     Tainted &   Tainted &      6.17E+02 &     1.68E+04 &      1.74E+04 &   30/30 \\
            &             &           &    (7.78E+01) &   (1.15E+04) &    (1.15E+04) &         \\
 Abp Output &      Normal &    Normal &      6.93E+03 &     1.57E+05 &      1.64E+05 &   30/30 \\
            &             &           &    (5.20E+03) &   (1.29E+05) &    (1.29E+05) &         \\
 Abp Output &      Normal &   Tainted &      6.51E+03 &     1.68E+04 &      2.33E+04 &   30/30 \\
            &             &           &    (3.97E+03) &   (1.15E+04) &    (1.29E+04) &         \\
\midrule
     Lock 2 &     Tainted &    Normal &           N-A &          N-A &           N-A &    0/30 \\
            &             &           &         (N-A) &        (N-A) &         (N-A) &         \\
     Lock 2 &     Tainted &   Tainted &      7.10E+01 &     1.15E+03 &      1.22E+03 &   30/30 \\
            &             &           &    (0.00E+00) &   (6.76E+02) &    (6.76E+02) &         \\
     Lock 2 &      Normal &    Normal &           N-A &          N-A &           N-A &    0/30 \\
            &             &           &         (N-A) &        (N-A) &         (N-A) &         \\
     Lock 2 &      Normal &   Tainted &      2.00E+02 &     1.15E+03 &      1.35E+03 &   30/30 \\
            &             &           &    (0.00E+00) &   (6.76E+02) &    (6.76E+02) &         \\
\midrule
     Lock 4 &     Tainted &    Normal &           N-A &          N-A &           N-A &    0/30 \\
            &             &           &         (N-A) &        (N-A) &         (N-A) &         \\
     Lock 4 &     Tainted &   Tainted &      2.41E+02 &     6.29E+03 &      6.53E+03 &   30/30 \\
            &             &           &    (0.00E+00) &   (5.52E+03) &    (5.52E+03) &         \\
     Lock 4 &      Normal &    Normal &           N-A &          N-A &           N-A &    0/30 \\
            &             &           &         (N-A) &        (N-A) &         (N-A) &         \\
     Lock 4 &      Normal &   Tainted &      3.45E+04 &     6.29E+03 &      4.08E+04 &   30/30 \\
            &             &           &    (0.00E+00) &   (5.52E+03) &    (5.52E+03) &         \\
\midrule
     Lock 5 &     Tainted &    Normal &           N-A &          N-A &           N-A &    0/30 \\
            &             &           &         (N-A) &        (N-A) &         (N-A) &         \\
     Lock 5 &     Tainted &   Tainted &      3.80E+02 &     2.62E+04 &      2.66E+04 &   30/30 \\
            &             &           &    (0.00E+00) &   (1.45E+04) &    (1.45E+04) &         \\
     Lock 5 &      Normal &    Normal &           N-A &          N-A &           N-A &    0/30 \\
            &             &           &         (N-A) &        (N-A) &         (N-A) &         \\
     Lock 5 &      Normal &   Tainted &      6.35E+05 &     2.62E+04 &      6.61E+05 &   30/30 \\
            &             &           &    (0.00E+00) &   (1.45E+04) &    (1.45E+04) &         \\
            \\
\midrule
    Fifo 01 &     Tainted &    Normal &      2.90E+01 &     1.71E+01 &      4.62E+01 &   30/30 \\
            &             &           &    (4.08E+00) &   (6.12E+00) &    (6.73E+00) &         \\
    Fifo 01 &     Tainted &   Tainted &      2.97E+01 &     1.38E+01 &      4.35E+01 &   30/30 \\
            &             &           &    (3.83E+00) &   (3.58E+00) &    (4.93E+00) &         \\
    Fifo 01 &      Normal &    Normal &      6.65E+01 &     1.71E+01 &      8.37E+01 &   30/30 \\
            &             &           &    (1.84E+01) &   (6.12E+00) &    (1.80E+01) &         \\
    Fifo 01 &      Normal &   Tainted &      7.07E+01 &     1.38E+01 &      8.46E+01 &   30/30 \\
            &             &           &    (1.74E+01) &   (3.58E+00) &    (1.68E+01) &         \\
\midrule
    Fifo 02 &     Tainted &    Normal &      1.16E+02 &     6.47E+01 &      1.81E+02 &   30/30 \\
            &             &           &    (3.26E+01) &   (2.77E+01) &    (4.28E+01) &         \\
    Fifo 02 &     Tainted &   Tainted &      1.01E+02 &     5.10E+01 &      1.52E+02 &   30/30 \\
            &             &           &    (3.03E+01) &   (1.55E+01) &    (3.31E+01) &         \\
    Fifo 02 &      Normal &    Normal &      3.62E+02 &     6.47E+01 &      4.27E+02 &   30/30 \\
            &             &           &    (1.29E+02) &   (2.77E+01) &    (1.33E+02) &         \\
    Fifo 02 &      Normal &   Tainted &      3.50E+02 &     5.10E+01 &      4.01E+02 &   30/30 \\
            &             &           &    (1.48E+02) &   (1.55E+01) &    (1.49E+02) &         \\
\midrule
    Fifo 03 &     Tainted &    Normal &      3.03E+02 &     1.34E+02 &      4.38E+02 &   30/30 \\
            &             &           &    (8.53E+01) &   (5.84E+01) &    (9.39E+01) &         \\
    Fifo 03 &     Tainted &   Tainted &      2.93E+02 &     1.05E+02 &      3.98E+02 &   30/30 \\
            &             &           &    (8.54E+01) &   (4.69E+01) &    (8.07E+01) &         \\
    Fifo 03 &      Normal &    Normal &      1.64E+03 &     1.34E+02 &      1.78E+03 &   30/30 \\
            &             &           &    (9.00E+02) &   (5.84E+01) &    (8.82E+02) &         \\
    Fifo 03 &      Normal &   Tainted &      1.93E+03 &     1.05E+02 &      2.03E+03 &   30/30 \\
            &             &           &    (1.34E+03) &   (4.69E+01) &    (1.31E+03) &         \\
\midrule
    Fifo 04 &     Tainted &    Normal &      6.87E+02 &     2.20E+02 &      9.06E+02 &   30/30 \\
            &             &           &    (1.51E+02) &   (1.11E+02) &    (2.14E+02) &         \\
    Fifo 04 &     Tainted &   Tainted &      6.35E+02 &     1.62E+02 &      7.96E+02 &   30/30 \\
            &             &           &    (1.41E+02) &   (7.53E+01) &    (1.53E+02) &         \\
    Fifo 04 &      Normal &    Normal &      1.22E+04 &     2.20E+02 &      1.24E+04 &   30/30 \\
            &             &           &    (1.22E+04) &   (1.11E+02) &    (1.22E+04) &         \\
    Fifo 04 &      Normal &   Tainted &      1.19E+04 &     1.62E+02 &      1.20E+04 &   30/30 \\
            &             &           &    (1.21E+04) &   (7.53E+01) &    (1.21E+04) &         \\
\midrule
    Fifo 05 &     Tainted &    Normal &      1.23E+03 &     3.53E+02 &      1.58E+03 &   30/30 \\
            &             &           &    (3.35E+02) &   (2.13E+02) &    (4.49E+02) &         \\
    Fifo 05 &     Tainted &   Tainted &      1.32E+03 &     2.24E+02 &      1.54E+03 &   29/30 \\
            &             &           &    (2.88E+02) &   (9.79E+01) &    (3.14E+02) &         \\
    Fifo 05 &      Normal &    Normal &      1.00E+05 &     3.19E+02 &      1.01E+05 &   25/30 \\
            &             &           &    (1.84E+05) &   (1.67E+02) &    (1.84E+05) &         \\
    Fifo 05 &      Normal &   Tainted &      1.28E+05 &     2.35E+02 &      1.28E+05 &   25/30 \\
            &             &           &    (2.08E+05) &   (8.76E+01) &    (2.08E+05) &         \\
\midrule
 Repetition &     Tainted &    Normal &           N-A &          N-A &           N-A &    0/30 \\
            &             &           &         (N-A) &        (N-A) &         (N-A) &         \\
 Repetition &     Tainted &   Tainted &      1.22E+02 &     7.33E+03 &      7.45E+03 &   30/30 \\
            &             &           &    (0.00E+00) &   (2.03E+03) &    (2.03E+03) &         \\
 Repetition &      Normal &    Normal &           N-A &          N-A &           N-A &    0/30 \\
            &             &           &         (N-A) &        (N-A) &         (N-A) &         \\
 Repetition &      Normal &   Tainted &      8.90E+03 &     7.33E+03 &      1.62E+04 &   30/30 \\
            &             &           &    (1.99E+03) &   (2.03E+03) &    (2.26E+03) &         \\
\midrule
     Set 01 &     Tainted &    Normal &      1.45E+02 &     1.28E+03 &      1.43E+03 &   29/30 \\
            &             &           &    (1.03E+02) &   (1.52E+03) &    (1.52E+03) &         \\
     Set 01 &     Tainted &   Tainted &      9.75E+01 &     1.83E+02 &      2.80E+02 &   30/30 \\
            &             &           &    (3.56E+01) &   (1.61E+02) &    (1.56E+02) &         \\
     Set 01 &      Normal &    Normal &      5.00E+06 &     1.28E+03 &      5.01E+06 &   29/30 \\
            &             &           &    (1.73E+07) &   (1.52E+03) &    (1.73E+07) &         \\
     Set 01 &      Normal &   Tainted &      2.96E+03 &     1.83E+02 &      3.15E+03 &   30/30 \\
            &             &           &    (6.71E+03) &   (1.61E+02) &    (6.69E+03) &         \\
\midrule
     Set 02 &     Tainted &    Normal &      1.61E+03 &     8.21E+03 &      9.82E+03 &   28/30 \\
            &             &           &    (9.96E+02) &   (1.26E+04) &    (1.24E+04) &         \\
     Set 02 &     Tainted &   Tainted &      1.00E+03 &     2.21E+02 &      1.23E+03 &   29/30 \\
            &             &           &    (3.26E+02) &   (2.14E+02) &    (3.68E+02) &         \\
     Set 02 &      Normal &    Normal &      4.61E+06 &     8.60E+03 &      4.62E+06 &   25/30 \\
            &             &           &    (1.43E+07) &   (1.31E+04) &    (1.43E+07) &         \\
     Set 02 &      Normal &   Tainted &      4.35E+04 &     2.20E+02 &      4.37E+04 &   30/30 \\
            &             &           &    (7.28E+04) &   (2.10E+02) &    (7.29E+04) &         \\
\midrule
     Set 03 &     Tainted &    Normal &      1.76E+04 &     5.01E+03 &      2.26E+04 &   24/30 \\
            &             &           &    (8.71E+03) &   (9.51E+03) &    (1.40E+04) &         \\
     Set 03 &     Tainted &   Tainted &      1.44E+04 &     6.91E+02 &      1.51E+04 &   30/30 \\
            &             &           &    (5.05E+03) &   (8.76E+02) &    (4.95E+03) &         \\
     Set 03 &      Normal &    Normal &      5.76E+06 &     3.94E+03 &      5.76E+06 &   14/30 \\
            &             &           &    (1.47E+07) &   (6.48E+03) &    (1.47E+07) &         \\
     Set 03 &      Normal &   Tainted &      2.01E+06 &     2.23E+02 &      2.01E+06 &   28/30 \\
            &             &           &    (3.60E+06) &   (2.06E+02) &    (3.60E+06) &         \\
\midrule
   Sip 2015 &     Tainted &    Normal &      2.14E+03 &     1.89E+05 &      1.92E+05 &   10/30 \\
            &             &           &    (4.00E+02) &   (2.60E+05) &    (2.60E+05) &         \\
   Sip 2015 &     Tainted &   Tainted &      2.30E+03 &     3.18E+04 &      3.41E+04 &   29/30 \\
            &             &           &    (3.13E+02) &   (1.59E+04) &    (1.59E+04) &         \\
   Sip 2015 &      Normal &    Normal &      1.57E+05 &     2.07E+05 &      3.65E+05 &    9/30 \\
            &             &           &    (4.42E+05) &   (2.69E+05) &    (4.81E+05) &         \\
   Sip 2015 &      Normal &   Tainted &      1.47E+05 &     3.18E+04 &      1.79E+05 &   29/30 \\
            &             &           &    (2.80E+05) &   (1.59E+04) &    (2.78E+05) &         \\
\bottomrule
\end{longtable}


\end{small}


\end{subappendices}
\fi

\end{document}
